\documentclass[12pt, a4paper]{amsart}

\usepackage{xcolor}
\definecolor{MyBlue}{cmyk}{1,0.13,0,0.63}
\definecolor{MyGreen}{cmyk}{0.91,0,0.88,0.52}
\newcommand{\mylinkcolor}{MyBlue}
\newcommand{\mycitecolor}{MyGreen}
\newcommand{\myurlcolor}{black}

\usepackage{hyperref}
\hypersetup{%
  bookmarksnumbered=true,bookmarksopen=false,%
  plainpages=false,% necessary to prevent duplicate page identifiers
  linktocpage=true,%
  colorlinks=true,breaklinks=true,%
  linkcolor=\mylinkcolor,citecolor=\mycitecolor,urlcolor=\myurlcolor,%
  pdfpagelayout=OneColumn,%
  pageanchor=true,%
}

\usepackage{graphicx}
\usepackage{amsmath, amsthm,  amsfonts, amscd}
\usepackage{amssymb}
\usepackage{url}
\usepackage{fullpage}
\usepackage{mathtools}
\usepackage[all]{xy}
\usepackage{slashed}
\usepackage{multirow}

\newtheorem{thm}{Theorem}[section]
\newtheorem*{thm*}{Theorem}
\newtheorem{cor}[thm]{Corollary}
\newtheorem{lemma}[thm]{Lemma}
\newtheorem{prop}[thm]{Proposition}

\theoremstyle{definition}
\newtheorem{defn}[thm]{Definition}

\theoremstyle{remark}
\newtheorem{remark}[thm]{Remark}

%\numberwithin{equation}{section}

\newcommand{\D}{\ensuremath{\mathcal{D}}}

\newcommand{\End}{\ensuremath{\mathrm{End}}}
\newcommand{\Hom}{\ensuremath{\mathrm{Hom}}}

\newcommand{\wt}{\ensuremath{\widetilde}}

\newcommand{\R}{\ensuremath{\mathbb{R}}}
\newcommand{\N}{\ensuremath{\mathbb{N}}}
\newcommand{\Z}{\ensuremath{\mathbb{Z}}}

\newcommand{\C}{\ensuremath{\mathbb{C}}}
\newcommand{\T}{\ensuremath{\mathbb{T}}}

\def\calT{\mathcal{T}}
\def\calC{\mathcal{C}}

\def\calK{\mathcal{K}}
\def\calB{\mathcal{B}}
\def\calH{\mathcal{H}}

\def\calA{\mathcal{A}}
\def\calN{\mathcal{N}}

\def\calE{\mathcal{E}}

\def\calU{\mathcal{U}}

\def\bP{\mathbf{P}}

%stuff added by iain
%%%%
\newcommand{\ol}{\overline}

\theoremstyle{definition}

\DeclareMathOperator{\Dom}{Dom}

\DeclareMathOperator{\Index}{Index}

\DeclareMathOperator{\Ker}{Ker}

\DeclareMathOperator{\Tr}{Tr}

\newcommand{\A}{\mathcal{A}}

%%%%

\newcommand{\rst}[1]{\ensuremath{{\mathbin\upharpoonright}%
\raise-.5ex\hbox{$#1$}}}

\newcommand{\mat}[2]{\left(\!\!\begin{array}{#1}#2\end{array}\!\!\right)}

\makeatletter

\newcommand{\Rmnum}[1]{\expandafter\@slowromancap\romannumeral #1@}
\makeatother

% Johannes newcommands
\newcommand{\mywedge}{\bigwedge\nolimits^{\!\ast}}
\newcommand{\hS}{\tilde S}

\newcommand{\ext}{\mathrm{ext}}

%comments

\author{C. Bourne}
\address{Mathematical Sciences Institute, Australian National University, Canberra, ACT 0200, Australia; 
{\em and} School of Mathematics and Applied Statistics, University of Wollongong, 
Wollongong, NSW 2522, Australia; 
{\em and} Department Mathematik, Friedrich-Alexander-Universit\"{a}t Erlangen-N\"{u}rnberg, 
Cauerstra{\ss}e 11, 91058 Erlangen, Germany
{\em and}
Advanced Institute for Materials Research, Tohoku University,
2-1-1 Katahira, Aoba-ku, Sendai, 980-8577, Japan}
\email{christopher.j.bourne@gmail.com}

\author{J. Kellendonk}
\address{Univerisit\'{e} de Lyon, Universit\'{e} Claude Bernard Lyon 1, 
Institute Camille Jordan, CNRS UMR 5208, 69622 Villeurbanne, France}
\email{kellendonk@math.univ-lyon1.fr}

\author{A. Rennie}
\address{School of Mathematics and Applied Statistics, University of Wollongong, 
Wollongong, NSW 2522, Australia}
\email{renniea@uow.edu.au}

\date{\today}

\begin{document}

\begin{abstract}
We study the application of Kasparov theory to topological insulator 
systems and the bulk-edge correspondence. We consider observable algebras 
as modelled by crossed products, where bulk and edge systems may be linked 
by a short exact sequence. We construct unbounded Kasparov modules encoding the 
dynamics of the crossed product. We then link bulk and edge Kasparov modules 
using the Kasparov product. Because of the anti-linear symmetries that 
occur in topological insulator models, real $C^*$-algebras and $KKO$-theory must be used.
\end{abstract}

\title{The $K$-theoretic bulk-edge correspondence for topological insulators}
\maketitle

Keywords: Topological insulators, $KK$-theory, bulk-edge correspondence 

Subject classification: Primary 81R60, Secondary 19K35

\tableofcontents

\section{Introduction}
The bulk-edge correspondence is fundamental for 
topological insulators. Indeed, insulators in non-trivial 
topological phases are characterised by the existence of 
edge states at the Fermi energy which are robust against 
perturbations coming from disorder. This is a topological 
bulk-boundary effect, as explained in all details for complex 
topological insulators  in \cite{PSBbook} (see Corollary~6.5.5). 
The theory of~\cite{PSBbook} is based on the noncommutative 
topology ($K$-theory, cyclic cohomology and index theory) 
of complex $C^*$-algebras, an approach to solid state 
systems proposed and developed by Bellissard~\cite{BelGapLabel,Bellissard94}.

Complex topological insulators are topological insulators 
which are not invariant under a symmetry  
implemented by anti-linear operators, such as time reversal symmetry. 
The class of insulators with anti-linear symmetries require noncommutative topology 
for real algebras and here the bulk-edge correspondence is still in 
development. 

The bulk-edge correspondence in its noncommutative formulation 
has two sides. One side is the $K$-theoretic correspondence, 
where the boundary map of $K$-theory yields an equality 
between $K$-group elements, one of the $K$-group of the 
observable (bulk) algebra, and the other of the $K$-group of the edge algebra. 
The other side is ``dual to $K$-theory", namely a correspondence 
between elements of a theory which provides us with functionals 
on $K$-theory. The pairing of $K$-group elements with the dual theory
can be used to obtain 
topological numbers from the $K$-theory elements and, 
possibly, with a physical explanation of these numbers as 
topologically quantised entities (non-dissipative transport coefficients). 

In the complex case this has been achieved using cyclic cohomology~\cite{SBKR02,KSB04b, PSBbook}. 
The result of applying a functional coming from 
cyclic cohomology (one speaks of a pairing between cyclic cocycles 
and $K$-group elements) is a complex number and since the functional 
is additive the result is necessarily $0$ on elements of finite order. 
The most exciting topological invariants, however, lead to $K$-group 
elements which have order $2$, the Kane--Mele invariant being 
such an example. This is why we look into another theory dual to 
$K$-group theory, namely $K$-homology (its proper dual in the algebraic sense). 
It leads to functionals which applied to $K$-group elements are 
Clifford index valued and do not vanish on finite groups. 
{\em Our specific aim in this paper is to describe a computable 
version of the bulk-edge correspondence for $K$-homology.} 

We actually work in the more general setting of Kasparov's $KK$-theory 
of which $K$-theory and $K$-homology are special cases. 
In doing so we generalise the approach using Kasparov's theory 
 for the quantum Hall effect~\cite{BCR14} to the case of all topological insulators.

Whereas the specific details of the insulator feed into the 
construction of $K$-group elements, the $K$-homology class
appears to be stable for systems of the same dimension. We call it the 
fundamental $K$-cycle as it is 
constructed similarly to Kasparov's fundamental class 
for oriented manifolds~\cite{KasparovNovikov, LRV12}.
It may well be a key feature of these kinds of condensed matter models 
that the physics (alternatively the geometry and topology) is governed 
by classes of this type, which is effectively the 
fundamental class  of the momentum space.

An important aspect of the $K$-cycle is that it 
involves the Dirac operator in momentum space 
$\sum_{j=1}^d X_j \gamma^j$ with $X_j$ the components 
of the position operator in $d$ dimensions, and $\gamma^j$ the gamma 
matrices acting in a physical representation of the 
relevant algebra on $\ell^2(\Z^d,\C^\nu)$ or $\ell^2(\Z^{d-1},\C^\nu)$
(we work in the tight binding approximation). This operator, 
or rather its phase, has played a fundamental role ever 
since it was employed in the integer quantum Hall effect (the phase induces 
a singular gauge transformation sending one charge to 
infinity, a process which is at the heart of Laughlin's 
Gedankenexperiment). The associated index formula 
has recently been generalised to all topological insulators~\cite{GSB15}.

Computing boundary maps in $K$-homology is notoriously 
difficult. It comes down to realising $K$-homology groups 
of $A$ as Kasparov $KKO_i(A,\R)$-groups \cite{Kasparov80} 
and determining the Kasparov product with the so-called 
extension class. Until recently this seemed impossible in general. 
But recent progress on the formulation of $KK$-theory 
via unbounded Kasparov cycles 
\cite{BJ83,BMvS13,FR15,GMR,KL13,Kucerovsky97,MeslandMonster, MR15}
make it feasible in our case. We determine 
 an unbounded Kasparov cycle for the  
class of the Toeplitz extension 
of the observable algebra of the insulator. 
The task is then to compute the Kasparov product of 
the $KK$-class of this extension, denoted $[\mathrm{ext}]$, with the class of the fundamental 
$K$-cycle of the edge algebra, denoted $\lambda_e$. 
One of the important results of this paper is that the 
product is, up to sign, the $KK$-class of the bulk fundamental 
$K$-cycle $\lambda_b$, 
$$ 
[\mathrm{ext}]\hat\otimes_{A_e} [\lambda_e]=(-1)^{d-1}[\lambda_b],
$$ 
where $[\lambda_\bullet]$ denotes the class of the Kasparov module in 
$KK$-theory.
This is the dual side to the bulk-boundary correspondence, 
showing how to relate the fundamental $K$-cycles of the bulk and edge theories.  
We also emphasise that by 
working in the unbounded setting, our computations are explicit and have 
direct link to the underlying physics and geometry of the system. 

We can relate our results about fundamental $K$-cycles to 
topological phases by identifying the (bulk) invariants of %interest of 
topological phases as a pairing/product of the $K$-cycle 
$\lambda_b$ with a $K$-theory class $[x_b]$. 
We do not prescribe what the $K$-group element $[x_b]$ 
of the bulk has to be. It can be the homotopy class of 
a symmetry compatible Hamiltonian 
(translated from van Daele $K$-theory to $KK$-theory) or the class 
of the symmetry of the insulator constructed in \cite{BCR15}.
Leaving this flexible gives us the possibility to 
consider insulators systems of quite general symmetry type  
without affecting the central correspondence. It also allows
our approach to be adapted to different experimental arrangements, 
an important feature given the difficulties of measuring $\Z_2$-labelled
phases experimentally.

The $K$-theory side of the bulk-edge correspondence can be obtained 
by realising the $K$-groups of the bulk algebra $A$ as 
Kasparov $KKO_j(\R,A)$-groups so that the boundary 
map applied to a $KK$-class $[x_b]$ of $A$ is given by the 
Kasparov product $[x_b]\hat\otimes_A [\rm ext]$. The 
bulk-boundary correspondence between topological 
quantised numbers is then a direct consequence of the associativity of the Kasparov product  
$$ 
([x_b]\hat\otimes_A [\mathrm{ext}])\hat\otimes_{A_e} [\lambda_e]
=[x_b]\hat\otimes_A ([\mathrm{ext}]\hat\otimes_{A_e} [\lambda_e]).
$$
The equation says that the two topological quantised entities, 
that for the bulk $ [x_b]\hat\otimes_A [\lambda_b]$, 
and that for the edge $[x_e]\hat\otimes_{A_e} [\lambda_e]$, where 
$[x_e]=[x_b]\hat\otimes_A[\mathrm{ext}]$ corresponds to the image of the boundary map 
on $[x_b]$, are equal. Having said that, the result of the 
pairing lies in $KKO_{i+j+1}(\R,\R)$ which is a group 
generated by one element and not a number, and so 
still needs interpretation. While this number can sometimes be 
interpreted as an index (modulo $2$ valued, in certain 
degrees) we lack a better understanding,  
which had been possible in the complex case due to the 
use of derivations which make up the cyclic cohomology classes.

Apart from
\cite{BCR14}, our work relies most substantially on \cite{BCR15} which 
used real Kasparov theory to derive the groups that appear in the
`periodic table of topological insulators and superconductors' as outlined by Kitaev~\cite{Kitaev09}.
Our work also builds on and complements that in~\cite{FM13} 
for the commutative case and~\cite{GSB15,Kellendonk15, Kubota15b, Thiang14} 
for the noncommutative approach. 
We consider systems with weak disorder (that is, disordered Hamiltonians retaining a 
spectral gap). The substantial problem of strong disorder and localisation will not 
be treated here.

\subsection{Relation to other work}
There have been several mathematical papers detailing aspects 
of the bulk-edge correspondence for topological insulators with 
anti-linear symmetries. Graf and Porta prove a bulk-edge correspondence for two-dimensional 
Hamiltonians with odd time-reversal symmetry using Bloch bundles and without 
reference to $K$-theory~\cite{GP13}. Similar results are obtained by Avilla, Schulz-Baldes and 
Villegas-Blas using spin Chern numbers and an argument involving transfer matrices 
\cite{ASV13, SchulzBaldes13}. Spin Chern numbers are related to the noncommutative 
approach to the quantum Hall effect and, as such, allow samples with disorder to be 
considered. 

An alternative approach is taken by Loring, who derives the invariants of topological 
phases by considering almost commuting Hermitian matrices and their Clifford 
pseudospectrum~\cite{Loring15}. Such a viewpoint gives expressions for the invariants 
of interest that are 
amenable to numerical simulation. Loring also relates indices associated to $d$ and $d+1$ 
dimensional systems, a bulk-edge correspondence~\cite[Section 7]{Loring15}. What is 
less clear is the link between Loring's results and earlier results on the 
bulk-edge correspondence for the quantum Hall effect, particularly~\cite{SBKR02, KSB04b}.

Papers by Mathai and Thiang establish 
a $K$-theoretic bulk-edge correspondence for time-reversal 
symmetric systems~\cite{MT15b, MT15c}, and more recently with Hannabuss \cite{HMT} consider
the general case of topological insulators. Mathai and Thiang show that 
the invariants of interest in $K$-theory pass from bulk to edge under the 
boundary map of the real or complex Pimsner--Voiculescu sequence. Mathai and Thiang also 
show that, under T-duality, the boundary map in $K$-theory of tori can be expressed 
as the conceptually simpler restriction map. Similar results also appear in the 
work of Li, Kaufmann and Wehefritz-Kaufmann, who consider time-reversal symmetric 
systems and their relation to the $KO$, $KR$ and $KQ$ groups of $\T^d$~\cite{LKW15}. 
A bulk-edge correspondence then links the topological $K$-groups associated to the bulk and 
boundary using the Baum--Connes isomorphism and Poincar\'{e} duality.

Recent work by Kubota establishes a bulk-edge correspondence in $K$-theory for 
topological phases of quite general type~\cite{Kubota15b}. Kubota follows the 
general framework of~\cite{FM13, Thiang14} and considers twisted equivariant 
$K$-groups of uniform Roe algebras, which can be computed using the 
coarse Mayer--Vietoris exact sequence and coarse Baum--Connes map. 
Classes in such groups are associated to 
gapped Hamiltonians compatible with a twisted symmetry group. An edge 
invariant is also defined and is shown to be isomorphic to the bulk class under 
the boundary map of the coarse Mayer--Vietoris exact sequence in $K$-theory 
\cite{Kubota15a, Kubota15b}. The use 
of Roe algebras and coarse geometry means that there is the potential for systems 
with impurities and uneven edges to be considered.

\subsection{Outline of the paper}
We begin with a short review of unbounded Kasparov theory in Section 
\ref{sec:Kasparov_prelim}. We particularly focus on real Kasparov 
theory as it is relatively understudied in the literature.

The central mathematical content of this paper is in Section \ref{sec:KKO_bulkedge}, 
where we construct fundamental $K$-cycles $\lambda^{(d)}$ for (possibly twisted) 
$\Z^d$-actions of unital $C^*$-algebras. We can then link 
actions of different order by the Pimsner--Voiculescu short exact 
sequence~\cite{PV80}, which we represent with an unbounded Kasparov 
module. Finally we use the unbounded product to show that the fundamental 
$K$-cycle $\lambda^{(d)}$ can be factorised into the product of the extension Kasparov 
module and $\lambda^{(d-1)}$ (up to a basis ordering of Clifford algebra 
elements, which may introduce a minus sign). These results are then applied to 
$C^*$-algebraic models of disordered or aperiodic media in Section \ref{sec:disorder_bulkedge}, 
where the (bulk) algebra of interest is the twisted crossed product 
$C(\Omega)\rtimes_{\alpha,\theta}\Z^d$.

In Section \ref{sec:K_theory_and_pairings} we relate our result to 
topological phases by pairing the $K$-cycle $\lambda^{(d)}$ with 
a $K$-theory class $[x_b]$ related to gapped Hamiltonians with 
time reversal and/or particle-hole and/or chiral symmetry. 
We also include a detailed discussion on the computation of the 
bulk and edge pairings, using both 
Clifford modules and the Atiyah--Bott--Shapiro construction 
as well as semifinite spectral triples and the semifinite local 
index formula. Limitations and open questions are also 
considered.

\subsubsection*{Acknowledgements}
All authors thank the Hausdorff Research Institute for Mathematics for support. 
CB and AR thank Alan Carey, Magnus Goffeng and Guo Chuan Thiang for useful discussions.  
CB and AR acknowledge the support of the Australian Research Council and CB is 
also supported by a Postdoctoral Fellowship of the Japan Society for the 
Promotion of Science.

%%%%%%%%%%%%%%%%%%%%%%%%%%%%%%%%%%%%%%%%%%%%%%%%%%%%%%%%%%%%%%%%%%%%%%
%%%%%%%%%%%%%%%%%%%%%%%%%%%%%%%%%%%%%%%%%%%%%%%%%%%%%%%%%%%%%%%%%%%%%%

%%%%%%%%%%%%%%%%%%%%%%%%%%%%%%%%%%%%%%%%%%%
%%%%%%%%%%%%%%%%%%%%%%%%%%%%%%%%%%%%%%%%%%%
\section{Preliminaries on real Kasparov theory} \label{sec:Kasparov_prelim}
%%%%%%%%%%%%%%%%%%%%%%%%%%%%%%%%%%%%%%%%%%%
%%%%%%%%%%%%%%%%%%%%%%%%%%%%%%%%%%%%%%%%%%%
For the convenience of the reader and to establish notation we
briefly summarise the results in real Kasparov theory of use to us for the bulk-edge 
correspondence. The reader may consult~\cite{Kasparov80, SchroderKTheory} 
or~\cite[Appendix A]{BCR15} for more information.

%%%%%%%%%%%%%%%%%%%%%%%%%%%%%%%%%%%%%%%%%%%
\subsection{$KKO$-groups}
%%%%%%%%%%%%%%%%%%%%%%%%%%%%%%%%%%%%%%%%%%%
The use of Clifford algebras to define higher $KK$-groups makes it important to
work with $\Z_2$-graded $C^*$-algebras and $\Z_2$-graded tensor 
products, $\hat\otimes$. A basic reference is 
\cite[{\S}14]{Blackadar}.
Given a real $C^*$-module $E_B$ (written $E$ when $B$ is clear) 
over a $\Z_2$-graded $C^*$-algebra $B$, 
we denote by $\End_B(E)$  the algebra of endomorphisms of 
$E$ which are adjointable with respect to the $B$-valued inner-product 
$(\cdot\mid\cdot)_B$ on $E_B$. The 
algebra of compact endomorphisms $\End_B^0(E)$ is generated
by the operators 
$\Theta_{e_1,e_2}$ for $e_1,e_2\in E$ such that for $e_3\in E$
$$  
  \Theta_{e_1,e_2}(e_3) = e_1\cdot (e_2\mid e_3)_B
$$
with $e\cdot b$ the (graded) right-action of $B$ on $E$.
\begin{defn}
Let $A$ and $B$ be $\Z_2$-graded $C^*$-algebras.
A real unbounded Kasparov module $(\calA, {}_\pi{E}_B, D)$ is a
$\Z_2$-graded real $C^*$-module ${E}_B$, a dense $*$-subalgebra $\A\subset A$ with graded 
real homomorphism $\pi:\calA\to \End_B({E})$, and an
unbounded, regular, odd and self-adjoint operator $D$ such that for all $a\in \calA$
\begin{align*}
  &[D,\pi(a)]_\pm \in \End_B({E}),   &&\pi(a)(1+D^2)^{-1/2}\in \End_B^0({E}).
\end{align*}
\end{defn}
Where unambiguous, we will omit the representation $\pi$ 
and write unbounded Kasparov modules as $(\calA, {E}_B, D)$.
The results of Baaj and Julg~\cite{BJ83} continue to hold for real 
Kasparov modules, so given an unbounded module 
$(\calA,{E}_B,D)$ we apply the bounded transformation 
$F_D = D(1+D^2)^{-1/2}$
to obtain the real (bounded) Kasparov module $(A,{E}_B,F_D)$, 
where $A$ is the $C^*$-closure of the dense subalgebra $\calA$.

One can define notions of unitary equivalence, 
homotopy and degenerate Kasparov modules 
in the real setting (see~\cite[\S{4}]{Kasparov80}). 
Hence we can define the group $KKO(A,B)$ as the 
equivalence classes of real (bounded) 
Kasparov modules modulo the equivalence relation 
generated by these relations.

Clifford algebras are used to define higher $KKO$-groups and 
encode periodicity. In the real setting, we define $C\ell_{p,q}$ 
to be the real span of the mutually anti-commuting generators 
$\gamma^1,\ldots,\gamma^p$ and $\rho^1,\ldots,\rho^q$ such that
$$   
(\gamma^i)^2 = 1, \quad (\gamma^i)^* = \gamma^i, \quad (\rho^i)^2=-1,
\quad(\rho^i)^*=-\rho^i.
$$

We now recall the relation between real $KK$-groups and real $K$-theory.
\begin{prop}[\cite{Kasparov80}, \S{6}, Theorem 3] 
\label{prop:real_Real_k_with_kasparov_equivalence}
For trivially graded, $\sigma$-unital real algebras $A$, 
there is an isomorphism $KKO(C\ell_{n,0},A) \cong KO_n(A)$.
\end{prop}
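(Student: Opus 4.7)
The plan is to reduce the statement to two basic inputs: the fundamental identification $KKO(\R,B)\cong KO_0(B)$ for trivially graded $\sigma$-unital $B$, together with the Clifford-algebra presentation of higher real $K$-theory, $KO_n(A)\cong KO_0(A\hat\otimes C\ell_{0,n})$. The bridge between these and the statement is the graded Morita equivalence $C\ell_{n,0}\hat\otimes C\ell_{0,n}\sim_{\mathrm{Morita}}\R$.

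Step one is to invoke formal Bott-type periodicity in $KKO$: exterior product with the invertible unit class $1_{C\ell_{0,n}}\in KKO(C\ell_{0,n},C\ell_{0,n})$ yields a map
\[
KKO(C\ell_{n,0},A)\;\xrightarrow{\;\cong\;}\;KKO\bigl(C\ell_{n,0}\hat\otimes C\ell_{0,n},\,A\hat\otimes C\ell_{0,n}\bigr),
\]
whose invertibility is obtained by Kasparov product with the classes of the Morita equivalence below. Step two is to identify $C\ell_{n,0}\hat\otimes C\ell_{0,n}$ with $\End_\R^0(C\ell_{n,0})$ as graded real $C^*$-algebras via the regular representation of $C\ell_{n,0}$ on itself; Morita invariance of $KKO$ in its first variable then collapses the left-hand algebra to $\R$, giving $KKO(\R,A\hat\otimes C\ell_{0,n})$. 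Step three is to apply the basic case $KKO(\R,B)\cong KO_0(B)$ with $B=A\hat\otimes C\ell_{0,n}$ and invoke the Clifford definition of higher $KO$-groups to land in $KO_n(A)$. Alternatively, one can give an explicit cycle-level construction: a real Kasparov module $(C\ell_{n,0},E_A,F)$ admits a natural reinterpretation as a Kasparov module $(\R,E_{A\hat\otimes C\ell_{0,n}},F)$, using that a graded left $C\ell_{n,0}$-action on the graded Hilbert module $E$ can be converted to a commuting right $C\ell_{0,n}$-action by exploiting $C\ell_{n,0}^{op}\cong C\ell_{0,n}$ (with an appropriate sign twist).

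The main obstacle is getting the graded Morita equivalence correct with the right sign and grading conventions. One must produce the imprimitivity bimodule $C\ell_{n,0}$ with its $\Z_2$-grading and verify that the induced isomorphism on $KKO$ matches the precise Clifford conventions used to define $KO_n(A)$; in particular that the result ends up in degree $+n$ and not $-n\equiv 8-n\pmod 8$. Once this graded identification is secured, naturality in $A$ is automatic because each of the three steps is induced by Kasparov product with a fixed class, and the composition of the three maps assembles into the claimed isomorphism.
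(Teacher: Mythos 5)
The paper offers no proof of this proposition at all---it is imported verbatim from Kasparov (\S 6, Theorem 3)---so your sketch has to be judged against the standard argument. Your first two steps are correct and standard: exterior product with $1_{C\ell_{0,n}}$, the graded isomorphism $C\ell_{n,0}\hat\otimes C\ell_{0,n}\cong \End_\R(C\ell_{n,0})$ (the graded opposite of $C\ell_{n,0}$ is indeed $C\ell_{0,n}$), and graded stability in the first variable give $KKO(C\ell_{n,0},A)\cong KKO(\R, A\hat\otimes C\ell_{0,n})$. This is Kasparov's ``formal periodicity'', the same Clifford-shuffling used in Proposition \ref{prop:Ext_is_KK^1}.

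The genuine gap is in your step three. The input you call the ``Clifford-algebra presentation of higher real $K$-theory'', $KO_n(A)\cong KO_0(A\hat\otimes C\ell_{0,n})$, is false if $KO_0$ means ordinary (ungraded) operator $K$-theory of the underlying real $C^*$-algebra: for $A=\R$, $n=1$ one has $A\hat\otimes C\ell_{0,1}\cong\C$ and $KO_0(\C)=\Z$, whereas $KO_1(\R)=\Z_2$; likewise $KO_0(\mathbb{H})=\Z\neq\Z_2=KO_2(\R)$. The grading cannot be forgotten. If instead you read it as the graded/Atiyah--Bott--Shapiro description of $KO_n$ (graded Clifford modules modulo those extending, van Daele $K$-theory, or simply $KKO(\R,A\hat\otimes C\ell_{0,n})$ itself), the argument becomes circular: identifying that group with the suspension-defined $KO_n(A)$ is precisely the content of the proposition, since $KO_n(A)$ in this paper is standard $K$-theory (it must interact with boundary maps of extensions, Table \ref{table:K_theory_classification}, etc.). The missing ingredient is the genuine Bott periodicity input: the $KKO$-equivalence between a one-generator Clifford algebra and $C_0(\R)$ with its even/odd grading (Kasparov's $\alpha$ and $\beta$ elements with $\alpha\hat\otimes\beta=1$ and $\beta\hat\otimes\alpha=1$, or an ABS/Wood-type comparison). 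Iterating it converts Clifford degree into suspension degree, $KKO(\R,A\hat\otimes C\ell_{0,n})\cong KKO(\R,C_0(\R^n)\otimes A)$, and only then does your basic case $KKO(\R,B)\cong KO_0(B)$, applied to the trivially graded, $\sigma$-unital $B=C_0(\R^n)\otimes A$, land in $KO_0(S^nA)=KO_n(A)$. Your worry about orientation and the sign of the degree is legitimate but secondary; without the periodicity step the argument never reaches $KO_n(A)$ at all.
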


Each short exact sequence of real $C^*$-algebras $0\to B\to C\to A\to 0$
with ideal $B$ and quotient algebra $A$ gives rise 
to an element of the extension group $\mathrm{Ext}_\R(A,B)$ 
and this group is related to the real Kasparov $KK$-groups.
\begin{prop}[\cite{Kasparov80}, {\S}7] \label{prop:Ext_is_KK^1}
If $A$ and $B$ are separable real $C^*$-algebras, then
\begin{align*} 
  \mathrm{Ext}_\R^{-1}(A,B) &\cong KKO(A\hat\otimes C\ell_{0,1},B) \cong KKO(A, B\hat\otimes C\ell_{1,0}).
\end{align*}
\end{prop}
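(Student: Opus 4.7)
The plan is to adapt Kasparov's classical treatment of the Ext--$KK$ correspondence (\cite{Kasparov80}, \S{7}) to the real setting. The substantive first isomorphism $\mathrm{Ext}^{-1}_\R(A,B) \cong KKO(A \hat\otimes C\ell_{0,1},B)$ proceeds via Stinespring dilation. Given a class in $\mathrm{Ext}^{-1}_\R(A,B)$, represented by a semisplit real extension $0 \to B \to C \to A \to 0$, I would extract its Busby invariant $\tau: A \to Q(B) := M(B)/B$ together with a real completely positive lift $\phi: A \to M(B)$. Applying the real version of Stinespring's dilation theorem to $\phi$ produces a Hilbert $B$-module $E$, a real $*$-homomorphism $\pi: A \to \End_B(E)$, and an isometry $V: B \to E$ with $\phi(a) = V^*\pi(a)V$. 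The projection $P := VV^*$ then satisfies $[P,\pi(a)] \in \End_B^0(E)$ for every $a \in A$, essentially because $\tau$ is a $*$-homomorphism while $\phi$ is only completely positive.

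To turn this data into a cycle in $KKO(A \hat\otimes C\ell_{0,1},B)$, I would double the module to $\hat E = E \oplus E$ equipped with the grading $(x,y) \mapsto (x,-y)$, let the generator $\rho$ of $C\ell_{0,1}$ act by $(x,y) \mapsto (-y,x)$ (which is odd, skew-adjoint, and squares to $-1$), extend $\pi$ diagonally, and take the odd self-adjoint operator to be $F = \bigl(\begin{smallmatrix} 0 & F_0 \\ F_0 & 0 \end{smallmatrix}\bigr)$, where $F_0 = 2P-1$. Since $F_0^2 = 1$ exactly on $E$, the Kasparov conditions $F^2 - 1 = 0$ and $[F, \pi(a)\hat\otimes 1] \in \End_B^0(\hat E)$ reduce at once to the compactness of $[P,\pi(a)]$ noted above. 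The inverse map sends a Kasparov cycle $(A \hat\otimes C\ell_{0,1}, E_B, F)$ to the extension obtained by pulling back the image of $\pi(A)$ in the graded Calkin-type algebra $\End_B(E)/\End_B^0(E)$, followed by a Morita reduction from an extension by $B \otimes \calK$ to one by $B$. Well-definedness on equivalence classes and the fact that these assignments are mutually inverse are handled as in the complex case, with care taken that the homotopies and compact perturbations used are available over $\R$.

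For the second isomorphism $KKO(A\hat\otimes C\ell_{0,1},B) \cong KKO(A, B\hat\otimes C\ell_{1,0})$, I would invoke the graded algebra isomorphism $C\ell_{0,1}\hat\otimes C\ell_{1,0}\cong C\ell_{1,1}$ together with the graded Morita equivalence $C\ell_{1,1} \sim \R$ afforded by the standard spinor representation on $\R \oplus \R$. The resulting invertible $KKO$-classes, combined by Kasparov product in each argument, yield
\begin{equation*}
KKO(A\hat\otimes C\ell_{0,1},B) \;\cong\; KKO(A\hat\otimes C\ell_{1,1},\, B\hat\otimes C\ell_{1,0}) \;\cong\; KKO(A,\, B\hat\otimes C\ell_{1,0}).
\end{equation*}
The principal obstacle is the first step: verifying that real Stinespring dilation and the associated compactness estimates really do go through without complex scalars, and that the Clifford grading data---in particular the skew-adjointness of $\rho$---are threaded through so that the passage between extensions and Kasparov cycles is genuinely reversible on equivalence classes. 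Once that technical core is secured, the remainder amounts to matching equivalence relations on each side.
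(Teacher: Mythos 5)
The paper offers no proof of this proposition---it is quoted directly from Kasparov (\cite{Kasparov80}, \S 7)---and your outline correctly reconstructs the standard argument in the real setting: Busby invariant plus real Stinespring dilation, compactness of $[P,\pi(a)]$ from multiplicativity of $\tau$, and the doubling trick with the skew-adjoint $C\ell_{0,1}$-generator for $\mathrm{Ext}_\R^{-1}(A,B)\cong KKO(A\hat\otimes C\ell_{0,1},B)$, followed by $C\ell_{0,1}\hat\otimes C\ell_{1,0}\cong C\ell_{1,1}$ and its graded Morita triviality for the second isomorphism. The only bookkeeping gloss is that $\mathrm{Ext}$ classes are conventionally organised through extensions by $B\otimes\calK$ with Busby invariant valued in $M(B\otimes\calK)/(B\otimes\calK)$, a stabilisation you in effect acknowledge via the Morita-reduction step, so there is no genuine gap.
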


\subsection{The product} \label{subsec:Kas_prod_intro}
The generality of the constructions and proofs in~\cite{Kasparov80} 
mean that all the central results in complex $KK$-theory carry 
over into the real (and Real) setting. In particular, the intersection product
$$  
KKO(A,B) \times KKO(B,C) \to KKO(A,C) 
$$
is still a well-defined map and other important properties such as stability 
$$  
KKO(A\hat\otimes\calK(\calH),B)\cong KKO(A,B)  
$$
continue to hold, where $\calK(\calH)$ is the algebra of real compact 
operators on a separable real Hilbert space. 

Let $(\calA,E^1_B,D_1)$ and $(\calB,E^2_C,D_2)$ be unbounded real 
$A$-$B$ and $B$-$C$ Kasparov modules. We would like to take the 
product at the unbounded level. Naively one would like to use the formula
$$
(\calA,(E^1\hat\otimes_BE^2)_C,D_1\hat\otimes1+1\hat\otimes D_2),
$$
where $(E^1\hat\otimes_BE^2)_C$ is the $B$-balanced $\Z_2$-graded tensor product.
This does not make sense as $D_2$ is not $\calB$-linear (nor $B$-linear) and so
$1\hat\otimes D_2$ does not descend to the balanced tensor product.
Instead one needs to choose a connection on $E^1_B$ to correct the naive 
formula for $1\hat\otimes\D_2$. First define the `$D_2$ one forms' by
$$
\Omega^1_{D_2}(\calB)=\Big\{\sum_jb_j[D_2,c_j]\in 
%\calB(\calH)
\mathrm{End}_C(E^2)
\,:\,  b_j,\,c_j\in\calB\Big\}.
$$
Then a $D_2$ connection on $E^1$ is a choice of 
dense $\calB$ submodule $\calE^1\subset E^1$
and a linear map
$$
\nabla:\calE^1\to \calE^1\hat\otimes_\calB\,\Omega^1_{D_2}(\calB)\quad\mbox{such that}\quad
\nabla(eb)=\nabla(e)b+e\hat\otimes[D_2,b].
$$
Setting $m:\Omega^1_{D_2}(\calB)\hat\otimes E^2\to E^2$ 
to be $m(a[D_2,b]\hat\otimes f)=a[D_2,b]f$, we then define
$$
(1\hat\otimes_\nabla D_2)(e\hat\otimes f)
=(-1)^{|e|}e\hat\otimes D_2f+(1\hat\otimes m)(\nabla(e)\hat\otimes f),
$$
with $|e|$ the degree of $e$ in the $\Z_2$-graded module $E^1_B$.
A short but illuminating calculation shows that $1\hat\otimes_\nabla D_2$ is well-defined on the
balanced tensor product, and it is reasonable to hope that the formula
\begin{equation}
D:=D_1\hat\otimes 1+1\hat\otimes_\nabla D_2
\label{eq:constructive}
\end{equation}
would define an operator on $E^1\hat\otimes_BE^2$ 
such that $(\calA,(E^1\hat\otimes_BE^2)_C,D)$
represents the product class.

In fact this is true in very many cases as proved in \cite{KL13,LRV12,MeslandMonster,MR15}.
These papers provide very general settings where the formula \eqref{eq:constructive}
can be guaranteed to produce an unbounded Kasparov module representing the product.
These proofs, all in the complex case, proceed by showing that the conditions of Kucerovsky's
theorem are satisfied. We will not develop such a general framework, but concretely
construct potential representatives according to the
recipe in Equation \eqref{eq:constructive}, and
check Kucerovsky's conditions directly. Importantly, Kucerovsky's theorem is
valid in the real case.

To state Kucerovsky's conditions, we start by defining a creation operator. Given $e_1\in{E}_B^1$ 
and a $\ast$-homomorphism $\psi:B\to\End_C({E}^2)$, we let 
$T_{e_1}\in\Hom_C({E}^2, {E}^1\otimes_B {E}^2)$ be given by 
$T_{e_1}e_2 = e_1\hat\otimes e_2$. One can check that $T_{e_1}$ is 
adjointable with $T_{e_1}^*(f_1\hat\otimes e_2) = \psi((e_1|f_1)_B)e_2$. 

\begin{thm}[Kucerovsky's criteria~\cite{Kucerovsky97}, Theorem 13] 
\label{thm:Kucerovsky_criterion}
Let $(\calA,{}_{\pi_1}{E}^1_B,D_1)$ and $(\calB,{}_{\pi_2}{E}^2_C,D_2)$ 
be unbounded  Kasparov modules. Write 
${E} := {E}^1\hat\otimes_B{E}^2$. Suppose that 
$(\calA,{}_{\pi_1}{E}_C,D)$ is an unbounded Kasparov module such that
\begin{description}
\item[Connection condition] For all $e_1$ in a dense 
subspace of $\pi_1(A){E}^1$, the commutators 
$$
\left[\mat{cc}{D&0\\0&D_2\\},\mat{cc}{0&T_{e_1}\\T_{e_1}^*&0\\}\right]
$$
are bounded on $\Dom(D\oplus D_2)\subset{E}\oplus{E}^2$;
\item[Domain condition] $\Dom(D)\subset\Dom(D_1\hat\otimes1)$;
\item[Positivity condition] For all $e\in\Dom(D)$, 
$$
((D_1\hat\otimes1)e|De) + (De|(D_1\hat\otimes1)e) \geq K (e|e)
$$ 
for some $K\in\R$. 
\end{description}
Then the class of $(\calA,{}_{\pi_1}{E}_C,D)$ in $KK(A,C)$ represents the  Kasparov product. 
\end{thm}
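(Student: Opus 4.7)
The plan is to reduce the unbounded assertion to the bounded Connes--Skandalis characterisation of the Kasparov product, which holds verbatim in the real setting. Applying the bounded transform $b(x) = x(1+x^2)^{-1/2}$ and invoking Baaj--Julg, the three unbounded modules pass to bounded Kasparov modules $(A,{E}^1_B,F_1)$, $(B,{E}^2_C,F_2)$ and $(A,{E}_C,F_D)$ representing the same classes. It then suffices to verify that $F_D$ is an $F_2$-connection on ${E}^1$ (the bounded connection condition) and that $\pi_1(a)[F_1\hat\otimes 1,F_D]\pi_1(a)^*$ is non-negative modulo $\End^0_C({E})$ for every $a\in A$ (the bounded positivity condition).

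For the connection condition I would work with the integral representation $F_D = \tfrac{2}{\pi}\int_0^\infty D(1+\lambda^2+D^2)^{-1}\,d\lambda$ and the analogous formula for $F_2$. The off-diagonal unbounded commutators in the hypothesis are precisely the differences $DT_{e_1} - T_{e_1}D_2$ and their adjoints, and their boundedness on $\Dom(D\oplus D_2)$ allows one to differentiate inside the integral and rewrite the bounded connection defect $F_D T_{e_1} - T_{e_1} F_2$ (and $[F_D, T_{e_1}]$) as norm-convergent integrals whose integrands, after multiplication by $\pi_1(a)$, become compact thanks to the resolvent compactness $\pi_1(a)(1+D^2)^{-1/2} \in \End^0_C({E})$. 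This gives the bounded Connes--Skandalis connection condition.

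For the positivity condition, the domain condition makes the anticommutator $(D_1\hat\otimes 1)D + D(D_1\hat\otimes 1)$ well-defined on $\Dom(D)$, where it is bounded below by $K$. Expanding $(F_1\hat\otimes 1)F_D + F_D(F_1\hat\otimes 1)$ through a double integral in the resolvent variables of $D_1\hat\otimes 1$ and of $D$, the leading term is manifestly non-negative by the unbounded estimate, while the remainder terms are products of resolvents with bounded correction operators. Sandwiching with $\pi_1(a)$ and $\pi_1(a)^*$ and using the compact-resolvent property once more, the remainder lies in $\End^0_C({E})$, yielding $\pi_1(a)\bigl[(F_1\hat\otimes 1)F_D + F_D(F_1\hat\otimes 1)\bigr]\pi_1(a)^* \geq 0$ modulo $\End^0_C({E})$, which is equivalent to the Connes--Skandalis positivity.

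The main obstacle is the positivity step: converting a lower bound on an unbounded anticommutator into an inequality modulo compact endomorphisms for the bounded transforms. One has to justify the resolvent integrals on appropriate cores, keep track of the fact that $D_1\hat\otimes 1$ and $D$ need not commute, and extract the compactness of all error terms from the combination of the domain condition, the connection condition, and the compact-resolvent hypothesis on $(\calA,{E}_C,D)$. Once these analytic points are in place, Connes--Skandalis, applicable in the real case just as in the complex one, delivers the conclusion that $(\calA,{}_{\pi_1}{E}_C,D)$ represents $[(\calA,{E}^1_B,D_1)]\hat\otimes_B[(\calB,{E}^2_C,D_2)]$ in $KKO(A,C)$.
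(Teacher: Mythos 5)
The paper does not actually prove this statement: it is quoted verbatim from Kucerovsky (Theorem 13 of \cite{Kucerovsky97}), with only the remark that the result remains valid for real $C^*$-modules, so there is no in-paper argument to compare against. Your strategy --- pass to bounded transforms and verify the Connes--Skandalis conditions (that $F_D$ is an $F_2$-connection and that $\pi_1(a)[F_1\hat\otimes 1,F_D]\pi_1(a)^*$ is positive modulo $\End^0_C(E)$) --- is indeed the strategy of Kucerovsky's original proof, so the overall route is the right one.

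As a proof, however, the proposal has genuine gaps at exactly the points where the theorem is hard. In the connection step, the compactness of the defect $F_DT_{e_1}-(-1)^{|e_1|}T_{e_1}F_2$ cannot come from ``multiplication by $\pi_1(a)$'' and the compactness of $\pi_1(a)(1+D^2)^{-1/2}$: the Connes--Skandalis connection condition contains no such factor of $\pi_1(a)$. One must instead exploit that $e_1$ lies in $\pi_1(A)E^1$, e.g.\ factor $T_{e_1}=\pi_1(a)T_{e'}$ and use compactness of $[F_D,\pi_1(a)]$ together with $\pi_1(a)f(D)$ compact for decaying $f$ (or factor $e_1=e''\cdot b$ and use $\pi_2(b)(1+D_2^2)^{-1/2}\in\End^0_C(E^2)$); this is fixable but is not what you wrote. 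The more serious gap is the positivity step, which you yourself flag as ``the main obstacle'' and then do not resolve: in the double resolvent expansion of $(F_1\hat\otimes 1)F_D+F_D(F_1\hat\otimes 1)$ the operators $D_1\hat\otimes 1$ and $D$ do not commute, so the terms you need to move past each other produce commutators such as $[(1+\lambda^2+D^2)^{-1},D_1\hat\otimes 1]$ which are not ``bounded correction operators'' under the stated hypotheses (even in the model case $D=D_1\hat\otimes1+1\hat\otimes_\nabla D_2$ the commutator $[D_1\hat\otimes1,1\hat\otimes_\nabla D_2]$ need not be bounded), and no argument is given that the resulting errors are compact after sandwiching with $\pi_1(a)$. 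Converting the semibounded unbounded anticommutator on $\Dom(D)$, via the domain condition, into positivity modulo compacts for the bounded transforms is precisely the technical core of Kucerovsky's Theorem 13, and it requires a genuinely different and more careful argument than the naive integral expansion sketched here. So the proposal outlines the correct reduction but does not yet constitute a proof.
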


In fact, if $\nabla$ satisfies an extra Hermiticity condition (which can always be achieved),
an operator of the form $D=D_1\hat\otimes1+1\hat\otimes_\nabla D_2$
always satisfies the domain condition and connection condition. Under mild hypotheses, 
the operator $1\hat\otimes_\nabla D_2$ will be self-adjoint~\cite[Theorem 3.17]{MR15}
and the sum will have locally compact resolvent (that is, $\pi_1(a)(1+D^2)^{-1/2}$ compact for 
$a\in \calA$)~\cite[Theorem 6.7]{KL13}. The
self-adjointness of the sum and the boundedness of commutators $[D,\pi_1(a)]$ needs to be
checked directly, as does the positivity condition. In our examples all these extra 
conditions are satisfied and so the task of checking that we have a spectral triple 
representing the product is relatively straightforward.

\subsection{Semifinite theory}
An unbounded $A$-$\C$ or $A$-$\R$ Kasparov module is precisely a complex or 
real spectral triple as defined by Connes. Complex spectral triples satisfying 
additional regularity properties have the advantage 
that the local index formula by Connes and Moscovici~\cite{CM95} gives computable expressions 
for the index pairing with $K$-theory, a special case of the Kasparov product
$$
  K_\ast(A) \times KK^\ast(A,\C) \to K_0(\C)\cong \Z.
$$

We would like to find computable expressions for more general Kasparov products. 
To do this, we generalise the definition of spectral triple using  
semifinite von Neumann algebras in place of the bounded operators on Hilbert space.

Let $\tau$ be a fixed faithful, normal, semifinite trace on a von Neumann algebra 
$\calN$. We let $\calK_\calN$ be the $\tau$-compact
operators in $\calN$ (that is, the norm closed ideal generated by the 
projections $P\in\calN$ with $\tau(P)<\infty$).
\begin{defn}[\cite{CPRS2}]
A semifinite spectral triple $(\calA,\calH,D)$ relative to $(\calN,\tau)$ is given by a $\Z_2$-graded 
Hilbert space $\calH$, 
a graded $\ast$-algebra $\calA\subset\calN$ with (graded) representation on 
$\calH$ and a densely defined odd 
unbounded self-adjoint operator $D$ affiliated to $\calN$ such that
\begin{enumerate}
  \item $[D,a]_\pm$ is well-defined on $\Dom(D)$ and extends to a bounded operator on $\calH$ for 
  all $a\in\calA$,
  \item $a(1+D^2)^{-1/2}\in\calK_\calN$ for all $a\in\calA$.
\end{enumerate}
A semifinite spectral triple $(\calA,\calH,D)$ is $QC^\infty$ if for 
$\partial = [(1+D^2)^{1/2},\cdot]$ and $b\in\calA\cup[D,\calA]$, $\partial^j(b)\in\calN$ 
for all $j\in\N$.

A unital semifinite spectral triple is $p$-summable if $(1+D^2)^{s/2}$ is $\tau$-trace-class 
for all $s>p$.
\end{defn}
If we take $\calN=\calB(\calH)$ and $\tau = \Tr$, then we recover the 
usual definition of a spectral triple.

\begin{thm}[\cite{KNR}] \label{thm:semifinite_to_KK}
Let $(\calA,\calH,D)$ be a complex semifinite spectral triple associated to 
$(\calN,\tau)$ with $\calA$ separable and $A$ its $C^*$-completion. Then $(\calA,\calH,D)$ 
determines a class in $KK(A,C)$ with $C$ a separable $C^*$-subalgebra of $\calK_\calN$.
\end{thm}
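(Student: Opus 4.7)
The plan is to manufacture from the semifinite spectral triple $(\calA,\calH,D)$ an honest complex unbounded $A$-$C$ Kasparov module and then invoke Baaj--Julg to extract the corresponding class in $KK(A,C)$. The content of the theorem is that the passage from $\calK_\calN$-compactness to genuine $C^*$-module compactness can be carried out for a suitable separable $C\subseteq\calK_\calN$.

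First I would fix the coefficient algebra $C$. Since $\calA$ is separable, choose a countable norm-dense $*$-subalgebra $\calA_0\subset\calA$ and a countable self-adjoint, sup-norm-dense family $\Phi\subset C_0(\R)$. Let $C$ be the norm closure in $\calN$ of the $*$-algebra generated by
$$
\{\,a\,\varphi(D),\ [D,a]\,\varphi(D)\,:\,a\in\calA_0,\ \varphi\in\Phi\,\}.
$$
Two things are immediate: $C$ is separable by construction, and $C\subseteq\calK_\calN$ because each generator lies in $\calK_\calN$ (using $a(1+D^2)^{-1/2}\in\calK_\calN$ and the ideal property of $\calK_\calN$ in $\calN$). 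By continuity $a(1+D^2)^{-1/2}\in C$ and $[D,a](1+D^2)^{-1/2}\in C$ for \emph{every} $a\in\calA$, which is what will drive the compactness argument later.

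Next I would upgrade $\calH$ to a right Hilbert $C$-module $E$. The natural candidate is the norm closure of $\calH\cdot C$ inside $\calH$, which makes sense because $C\subset\calN$ acts on $\calH$. To equip this closure with a $C$-valued inner product, I would use an approximate unit $(u_n)\subset C$ of positive contractions together with Kasparov's stabilisation/absorption theorem to realise the closure of $\calH\cdot C$ as a countably generated, complemented submodule of the standard module $\calH_C=\ell^2(\N,C)$; this supplies the $C$-valued inner product and the Hilbert $C$-module structure simultaneously. The grading of $\calH$ descends to a grading of $E$, the representation $\pi:\calA\to\End_C(E)$ is by left multiplication (well defined since $\calA\cdot C\subseteq C$), and $D$ restricts to an unbounded self-adjoint regular operator $\tilde D$ on $E$ because the bounded functional calculus of $D$ preserves $C$ and hence preserves $E$. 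The bounded-commutator condition $[\tilde D,\pi(a)]\in\End_C(E)$ is inherited directly from $[D,a]\in\calN$ via the identity $[\tilde D,\pi(a)]\xi=[D,a]\xi$ on the smooth domain.

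The main obstacle is the compactness axiom $\pi(a)(1+\tilde D^2)^{-1/2}\in\End^0_C(E)$: one must lift the a priori $\tau$-compactness in $\calK_\calN$ to genuine compactness of endomorphisms of $E_C$. This is precisely what the choice of $C$ in Step~1 was designed for: since $a(1+D^2)^{-1/2}\in C$, under the identification of elements of $C$ with their action on the submodule $E\subseteq\calH_C$, this operator is the limit in $C$ of finite sums $\sum_j c_j^{(n)}(d_j^{(n)})^*$, and each such sum corresponds to a finite-rank endomorphism $\sum_j\Theta_{c_j^{(n)},d_j^{(n)}}$ of $E_C$; norm-convergence in $C$ then gives norm-convergence in $\End_C(E)$, so $\pi(a)(1+\tilde D^2)^{-1/2}\in\End^0_C(E)$. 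Once this is secured, $(\calA,E_C,\tilde D)$ is a bona fide complex unbounded Kasparov module, and applying the Baaj--Julg bounded transform yields the desired class in $KK(A,C)$. Independence of the auxiliary choices of $\calA_0$ and $\Phi$ follows by a standard approximation/homotopy argument, since any two such $C$'s sit inside a common third of the same form.
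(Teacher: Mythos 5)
The paper offers no proof of this statement --- it is quoted verbatim from [KNR] --- so your argument has to stand on its own, and it does not: the step that creates the Hilbert $C$-module $E$ is broken. The norm closure of $\calH\cdot C$ inside $\calH$ is merely a closed subspace of a Hilbert space; it carries no natural $C$-valued inner product, and Kasparov's stabilisation theorem cannot be used to \emph{supply} one, since stabilisation is a statement about objects that are already countably generated Hilbert $C$-modules --- the appeal to it here is circular. In fact no compatible structure exists in general, because the Hilbert-space norm and any $C$-module norm are incomparable. Take $\calN=L^\infty[0,1]$ acting on $\calH=L^2[0,1]$ with $\tau$ given by Lebesgue integration (so $\calK_\calN=\calN$), $C=C[0,1]$, and $D$ multiplication by an unbounded real function: all hypotheses of the theorem hold, but if the $L^2$-closure of $\calH\cdot C$ carried a $C$-valued inner product inducing the $L^2$-norm and satisfying $(\xi c\mid \xi c)_C=c^*(\xi\mid\xi)_C\,c$, then testing with bump functions $c_n$ of shrinking support centred at a point where $(\xi\mid\xi)_C$ does not vanish gives $\|\xi c_n\|_{L^2}\to 0$ while $\|c_n^*(\xi\mid\xi)_C c_n\|_\infty$ stays bounded below, a contradiction. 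The same confusion infects your compactness step: you treat $a(1+D^2)^{-1/2}\in C$, an operator on $\calH$, as a norm limit of finite-rank module endomorphisms $\sum_j\Theta_{c_j,d_j}$ with $c_j,d_j\in C$, but $\Theta_{e,f}$ requires vectors $e,f$ in the module $E\subseteq\calH$, not elements of $C$; the identification you assert is exactly the passage from $\tau$-compactness in $\calN$ to compactness in $\End_C(E)$, i.e.\ it is the content of the theorem, not a consequence of your choice of $C$.

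There are further repairs needed even granting a module: $a\mapsto[D,a]$ is not continuous in the $C^*$-norm, so ``by continuity $[D,a](1+D^2)^{-1/2}\in C$ for every $a\in\calA$'' fails as stated; $\calA\cdot C\subseteq C$ does not follow from your two families of generators (e.g.\ $b\,[D,a]\varphi(D)$ is not visibly in $C$ because $[D,\cdot]$ cannot be pushed through norm limits); and self-adjointness and regularity of $\tilde D$ on $E$ would require proof rather than the remark that functional calculus preserves $C$. But these are secondary. The argument in [KNR] is of a different nature: one works with the bounded transform and, after enlarging $C$ so that it is closed under all the operations actually needed, constructs a genuine Kasparov $A$-$C$ module whose underlying countably generated module is manufactured from $C$ itself (where elements of $C$ act as compact endomorphisms since $\End^0_C(C_C)\cong C$), rather than by trying to equip a subspace of $\calH$ with a $C$-valued inner product; the trade of $\calK_\calN$-compactness for $C^*$-module compactness is the technical heart of that paper and cannot be finessed by the choice of generators alone. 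If you want a self-contained proof, start from that strategy rather than from $\calH$.
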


Semifinite spectral triples can also be paired with $K$-theory elements 
by the following composition
\begin{equation} \label{eq:complex_semifinite_index_pairing}
  K_0(A) \times KK(A, C) \to K_0(C) \xrightarrow{\tau_\ast} \R,
\end{equation}
with the class in $KK(A,C)$ coming from Theorem \ref{thm:semifinite_to_KK}. 
The range of the semifinite index pairing is a discrete subset of $\R$ and can 
potentially detect finer invariants than the usual index pairing. Importantly, 
the local index formula can be generalised to $QC^\infty$ and $p$-summable 
semifinite spectral triples for $p\geq 1$~\cite{CPRS2,CPRS3}. Hence the semifinite 
local index formula may be used to compute the map in Equation 
\eqref{eq:complex_semifinite_index_pairing}.

Suppose that $(\calA, E_B, D)$ is an unbounded Kasparov module 
and the algebra $B$ possesses a faithful semifinite 
norm-lower semicontinuous trace. Then one can often
construct a semifinite spectral triple
(see for example~\cite{PR06}). The passage from Kasparov module to 
semifinite spectral triple is advantageous as the algebra $B$ is 
usually more closely linked to the example or problem under consideration 
than the algebra $C$ from Theorem \ref{thm:semifinite_to_KK}.
 If a sufficiently regular (complex) semifinite 
spectral triple can be constructed, then we may use the semifinite 
local index formula to compute the map given in Equation 
\eqref{eq:complex_semifinite_index_pairing} (with $C$ replaced by $B$). 
Therefore semifinite 
spectral triples and index theory can be employed in order to 
compute pairings of complex $K$-theory 
classes with unbounded Kasparov modules.

\begin{remark}[Real semifinite spectral triples and the local index 
formula]
While the definition of a semifinite spectral triple and 
semifinite index pairing can be extended to real algebras, the semifinite 
local index formula can not be used to detect 
torsion invariants as it involves a mapping to cyclic cohomology. 
However, one may naturally ask whether the semifinite local 
index formula can be used to detect integer invariants in the real 
setting (e.g. arising from $K_0(\R)$ or $K_4(\R)$). We are of the opinion 
that a local formula can be employed to access real integer invariants as in \cite{CPRS2,CPRS3}, 
though the details of the proof given for the complex case need to be checked
for the real case.
\end{remark}

%%%%%%%%%%%%%%%%%%%%%%%%%%%%%%%%%%%%%%%%%%%
%%%%%%%%%%%%%%%%%%%%%%%%%%%%%%%%%%%%%%%%%%%

\section{Kasparov modules and boundary maps of twisted $\Z^d$-actions}
\label{sec:KKO_bulkedge}
%%%%%%%%%%%%%%%%%%%%%%%%%%%%%%%%%%%%%%%%%%%
%%%%%%%%%%%%%%%%%%%%%%%%%%%%%%%%%%%%%%%%%%%
Our motivation is to study topological invariants of disordered 
or aperiodic topological states in the tight-binding approximation.
Suppose $A$ is an observable algebra acting on $\ell^2(\Z^d)$ and is such that 
$A \cong C\rtimes_\beta \Z$ with 
$C$ acting on $\ell^2(\Z^{d-1})$. We consider the 
Toeplitz extension~\cite{PV80,SchroderKTheory}
$$
   0 \to C\otimes\calK(\ell^2(\N)) \to \calT(\beta) \to A \to 0.
$$
As in~\cite{SBKR02,PSBbook} and Section \ref{sec:disorder_algebra_and_ext}, 
the algebra $\calT(\beta)$ acts on $\ell^2(\Z^{d-1}\times\N)$ and can 
be interpreted as the algebra of observables on the space with boundary. The ideal 
$C\otimes \calK$ is thought of as the observables concentrated at the boundary $\ell^2(\Z^{d-1}\times \{0\})$ 
with compact decay into the interior of the sample. The quotient 
$A$ acts on a space without boundary (approximating the interior of a sample) 
and is often called the bulk algebra.
It follows from Kasparov's work that any short exact sequence of (trivially graded)
$C^*$-algebras gives 
rise to a long exact sequence in $K$-homology whose boundary map is given by 
the internal Kasparov product with the extension class defined by the Toeplitz extension. 
Hence topological properties of the algebra without boundary $A$ can be related 
to the edge algebra $C$ by 
the boundary map of $K$-theory or $K$-homology (or $KK$-theory more generally).

Boundary maps in $KK$-theory are in general very hard to compute. 
In our case we use the fact that for disordered media our bulk and edge 
algebras arise as (twisted) crossed products. Therefore 
we consider $A \cong B\rtimes_{\alpha,\theta}\Z^d$ 
and $C\cong B\rtimes_{\alpha^{\|},\theta}\Z^{d-1}$ for $B$ a separable and 
unital (real or complex) $C^*$-algebra. The dynamics of the 
crossed product allows us to build 
explicit unbounded Kasparov modules for the $A$ and $C$, which are then shown to be 
directly related under the boundary map in $KK$-theory.  Our formula is explicit and avoids 
homotopy arguments.

In Section \ref{sec:disorder_bulkedge} we will specialise to the physically interesting 
case of $B=C(\Omega)$, where $C(\Omega)\rtimes_{\alpha,\theta}\Z^d$ is the (bulk) observable 
algebra of a disordered or aperiodic solid state system~\cite{BelGapLabel, PSBbook}.

%%%%%%%%%%%%%%%%%%%%%%%%%%%%%%%%%%%%%%%%%%%
\subsection{Fundamental $K$-cycles for $\Z^d$-actions}
%%%%%%%%%%%%%%%%%%%%%%%%%%%%%%%%%%%%%%%%%%%
We construct an unbounded Kasparov module encoding a (twisted) $\Z^d$-action on a real or complex $C^*$-algebra $B$. 
We remark that similar constructions appear 
in~\cite{HawkinsThesis,PR06,Prodan15} in the complex setting.

Let $B$ be a separable and unital $C^*$-algebra with action $\alpha$ of $\Z^d$ 
and twisting cocycle $\theta:\Z^d\times\Z^d \to \calU(B)$, with $\calU(B)$ the unitaries of $B$. 
The twisted 
crossed product $A:=B\rtimes_{\alpha,\theta}\Z^d$ is the universal 
$C^*$-completion of the algebraic crossed product $\calA:=B_{\alpha,\theta}\Z^d$ 
given by finite sums $\sum_{n\in\Z^d} S^n b_n  $ where $b_n\in B$, 
$n\in\Z^d$ is a multi-index
and $S^n = S_1^{n_1}\cdots S_d^{n_d}$ is a product of powers of $d$ 
abstract unitary elements $S_i$ subject to the multiplication 
extending that of $B$ by
\begin{align*}
&S_i b = \alpha_{i}(b) S_i,    &&S_i S_j = \theta_{ij} S_jS_i,  &&S_i^* = S_i^{-1}.
\end{align*}
The map $\alpha_i$ is the automorphism 
corresponding to the action of $e_i\in\Z^d$ for $e_i$ the standard generators of $\Z^d$.
The elements 
$\theta_{ij}$ belong to $B$ and can be obtained from the cocycle $\theta$. 

As an example,
let us consider the case $d=1$ in which $\theta=1$.
Given a left action $\rho$ of $B$ on a module $M$ one 
obtains a left action $\pi$ of $ B_\alpha\Z$ on the 
module given by the algebraic tensor product $\ell^2(\Z)\otimes M$.
The action $\pi$ is given on elementary tensors by
\begin{align} \label{eq-rep}
 &\pi(b)(\delta_j\otimes \xi) =  \delta_j\otimes \rho(\alpha^{-j}(b))\xi, 
 &&\pi(S)(\delta_j\otimes \xi) = \delta_{j+1}\otimes \xi
\end{align}
with $\{\delta_j\}_{j\in\Z}$ the standard basis of $\ell^2(\Z)$ and $\xi\in M$.
We will be interested in the case where $M$ is a $C^*$-module so 
that we can expect the action to extend to an action of the $C^*$-crossed 
product $B\rtimes_\alpha\Z$ on the completion of the algebraic 
tensor product $\ell^2(\Z)\otimes M$ which we denote by $\ell^2(\Z,M)$.

The above procedure iterated for higher-order crossed products 
yields an action of 
$A=B\rtimes_{\alpha,\theta}\Z^d$ on $\ell^2(\Z^d,M)$ if $A$ can be 
rewritten as a $d$-fold iterated crossed product with $\Z$. This is always the 
case if 
we allow $B$ to be replaced by its stabilisation~\cite{PR89}, 
but the left-action may also be modified to yield directly a 
representation of $B\rtimes_{\alpha,\theta}\Z^d$~\cite{PR89}. 

We consider the left regular 
representation of $B$ on the $C^*$-module 
$B_B$ 
and obtain a representation of the crossed product
$B\rtimes_{\alpha,\theta} \Z^d$ on the $C^*$-module $\ell^2(\Z^d)\otimes B =: \ell^2(\Z^d,B)$. 
More precisely, the map 
$\imath : B_{\alpha,\theta}\Z^d \to \ell^2(\Z^d, B)$, $\imath( S^n b) = \delta_n\otimes b$ 
for $\{\delta_n\}_{n\in\Z^d}$ the canonical basis of $\ell^2(\Z^d)$
is an inclusion of the dense subalgebra $\calA$ into the  $C^*$-module such that the 
representation of $\calA=B_{\alpha,\theta}\Z$ 
on the image of $\imath$ is given by left multiplication: 
$$
  S^n b_1 \cdot (\delta_m\otimes b_2) = \imath(S^n b_1 S^m b_2) = \delta_{n+m} \otimes  
\alpha^{-m-n}(\theta(m,n)) \alpha^{-m}(b_1) b_2
 $$
with multi-index notation $\alpha^m = \alpha_1^{m_1}\cdots \alpha_d^{m_d}$.
The  $C^*$-module is the completion of $\imath(\calA)$ 
with respect to the $B$-valued inner product
$$
   \left( \lambda_1 \otimes b_1 \mid \lambda_2 \otimes b_2 \right)_B = 
     \langle \lambda_1, \lambda_2 \rangle_{\ell^2(\Z^d)} \, b_1^* b_2
$$
and norm $\| \xi\|^2 = \|(\xi\mid \xi)_B \|_B$. The $C^*$-module also carries 
a right-action of $B$ by right-multiplication. Note that on $\imath(\calA)$ the inner product may be written as
$$
   \left( \imath(a_1) \mid \imath(a_2) \right)_B = \Phi_0(a_1^*a_2), \qquad
   \Phi_0(S^n b) = \begin{cases} b, & n=0 \\ 0, & \text{otherwise} \end{cases}.
$$
As the conditional expectation $\Phi_0:A\to B$ is positive, $\Phi_0(b^*ab) \leq \|a\|_A\Phi_0(b^*b)$ 
for any positive $a\in A$ and $b\in B$.

\begin{prop} \label{prop:left_action_adjointable}
The left-action of $\calA$ on $\ell^2(\Z^d,B)$ extends to an adjointable 
representation of $A=B\rtimes_{\alpha,\theta} \Z^d$.
\end{prop}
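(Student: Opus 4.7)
The plan is to verify that left multiplication gives a bounded, adjointable operator on the dense subspace $\imath(\calA) \subset \ell^2(\Z^d, B)$, and then extend by continuity to all of $A$. The two things to check are formal adjointability with respect to the $B$-valued inner product, and a norm bound with constant $\|a\|_A$.

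First I would verify formal adjointability on the dense subspace. Given $a, a_1, a_2 \in \calA$, using the explicit formula $(\imath(a_1) \mid \imath(a_2))_B = \Phi_0(a_1^* a_2)$, one computes
\[
\bigl(\pi(a) \imath(a_1) \mid \imath(a_2)\bigr)_B = \Phi_0\bigl((a a_1)^* a_2\bigr) = \Phi_0\bigl(a_1^* a^* a_2\bigr) = \bigl(\imath(a_1) \mid \pi(a^*) \imath(a_2)\bigr)_B,
\]
so $\pi(a^*)$ is a formal adjoint of $\pi(a)$ on $\imath(\calA)$. This step is essentially immediate from the definitions.

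The main work is the norm estimate. For $\xi = \imath(c) \in \imath(\calA)$, I would compute
\[
\|\pi(a) \xi\|^2 = \bigl\| \Phi_0(c^* a^* a c) \bigr\|_B.
\]
Since $a^*a \le \|a\|_A^2 \cdot 1_A$ in $A$ and $\Phi_0$ is positive, the estimate $\Phi_0(b^* x b) \le \|x\|_A \Phi_0(b^* b)$ for positive $x \in A$ (quoted just above the proposition) with $x = a^*a$ and $b = c$ gives
\[
\Phi_0(c^* a^* a c) \le \|a\|_A^2 \, \Phi_0(c^* c)
\]
in $B$, and taking norms yields $\|\pi(a) \xi\| \le \|a\|_A \|\xi\|$. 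Thus $\pi(a)$ is a bounded operator on $\imath(\calA)$ of norm at most $\|a\|_A$ with a bounded formal adjoint $\pi(a^*)$, so it extends uniquely to an adjointable operator on the completion $\ell^2(\Z^d, B)$.

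Finally, the resulting map $\pi \colon \calA \to \End_B(\ell^2(\Z^d, B))$ is a $*$-homomorphism satisfying $\|\pi(a)\| \le \|a\|_A$, so by the universal property of the $C^*$-completion $A = B \rtimes_{\alpha,\theta} \Z^d$ it extends to an adjointable representation of $A$ on $\ell^2(\Z^d, B)$. The principal technical point is the norm estimate, which hinges on the positivity of the conditional expectation $\Phi_0$; once that is in hand, both boundedness and the correct extension constant fall out together.
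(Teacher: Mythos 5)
Your proposal is correct and follows essentially the same route as the paper: formal adjointability on $\imath(\calA)$ via the identity $(\imath(a_1)\mid\imath(a_2))_B=\Phi_0(a_1^*a_2)$, the norm bound $\|\pi(a)\xi\|\le\|a\|_A\|\xi\|$ from positivity of the conditional expectation $\Phi_0$, and extension by continuity to the $C^*$-completion $A$. The only cosmetic difference is that you phrase the final step via the universal property of the crossed product, whereas the paper extends directly by continuity using the bound already obtained; both are valid given the estimate $\|\pi(a)\|\le\|a\|_A$.
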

\begin{proof}
We first check adjointability of the elements of $\A$ acting on $\imath(\calA)$ by computing
\begin{align*}
   \left( a\cdot \imath(a_1) \mid \imath(a_2) \right)_{B} 
     &= \Phi_0((aa_1)^*a_2) \\
     &= \Phi_0(a_1a^*a_2) =  \left( \imath(a_1) \mid  a^*\cdot\imath(a_2) \right)_{B} 
\end{align*}
Furthermore, the action of $\A$ on $\imath(\calA)$ is uniformly bounded:
\begin{align*}
\|a\|^2_{\End} &=
 \sup_{\substack{ a' \in \calA \\ \|\imath(a')\|=1}} \|(a\cdot \imath(a')\mid a\cdot \imath(a'))_{B}\| \\
&   \leq  \sup_{\substack{ a' \in \calA \\ \|\imath(a')\|=1}}
 \| a^*a\|_A\, \|\Phi_0({a'}^*a')\|
= \|a^*a\|_A.
\end{align*}
Hence the action extends by continuity, first from $\calA$ on $\imath(\calA)$
to a bounded action of $\calA$ on $\ell^2(\Z^d,B)$ and then to a bounded action of the whole $C^*$-algebra $A$ on $\ell^2(\Z^d,B)$. It then follows that all elements of $A$ are adjointable.
\end{proof}

Inside $\ell^2(\Z^d,B)$ we consider the elements $e_m=\delta_m \otimes 1_B$ 
for $m\in\Z^d$. 
We compute 
that for $b\in B$
\begin{align*}
 \Theta_{e_l,e_m}\,e_nb =  \Theta_{e_l,e_m} (\delta_n \otimes b) 
   &= (\delta_l \otimes 1_B)\cdot
    \left(\delta_m\otimes 1_B \mid \delta_n \otimes b \right)_B  \\
    &= \delta_{m,n} \, \delta_l \otimes b=\delta_{m,n}\,e_l\cdot b.
\end{align*}
In particular we note that 
$$ \sum_{m\in\Z^d} \Theta_{e_m, e_m} = \mathrm{Id}_{\ell^2(\Z^d,B)} $$
and so $\{{e_m}\}_{m\in\Z^d}$ is a frame for the module $\ell^2(\Z^d,B)$.

The last ingredient we need for a Kasparov module is a Dirac-like operator, which 
we construct using the position operators, 
$X_j: \Dom(X_j) \to \ell^2(\Z^d,B)$ for 
$j\in\{1,\ldots,d\}$ such that $X_j(\delta_m\otimes b) = m_j (\delta_m \otimes b)$ for any 
$m\in\Z^d$. We construct the Dirac-like operator via an explicit 
Clifford action.
On the graded vector space $\mywedge\R^{d}$ (we denote the grading by  $\gamma_{\bigwedge^* \R^d} $)
there is a representation of $C\ell_{d,0}$ and a representation of $C\ell_{0,d}$. The
generators $\gamma^j$ of $C\ell_{d,0}$ and the generators $\rho^j$ of $C\ell_{0,d}$ 
act by
\begin{align*}
&\gamma^j(w) = e_j \wedge w + \iota(e_j)w,
  &&\rho^j(w) = e_j\wedge w - \iota(e_j)w, 
\end{align*}
for $\{e_j\}_{j=1}^{d}$ the standard basis of $\R^{d}$,
$w\in\bigwedge^*\R^{d}$ and $\iota(v)w$ the contraction of $w$ along $v$. 
These two actions graded-commute.
On the tensor product space
$\ell^2(\Z^d,B)\otimes \mywedge\R^{d}$ we define
$$
D := \sum_{j=1}^{d} X_j\otimes \gamma^j.
$$
A simple check shows that $D$ is odd, self-adjoint and regular on 
$\ell^2(\Z^d,B)\otimes \bigwedge^*\R^d$.

\begin{prop} 
\label{prop:general-toeplitz}
Consider a possibly twisted $\Z^d$-action $\alpha,\theta$ on a 
separable and unital $C^*$-algebra 
$B$. Let $A$ be the associated crossed product with dense subalgebra 
$\calA = B_{\alpha,\theta}\Z^d$.
The data
\begin{equation*} \label{eq:spectral_trip} 
 \lambda^{(d)} = \bigg( \calA \hat\otimes C\ell_{0,d},\, \ell^2(\Z^{d},B)_B \otimes 
  \bigwedge\nolimits^{\!\ast}\R^{d},\, \sum_{j=1}^{d} X_j\otimes \gamma^j, \,
  \gamma_{\bigwedge^* \R^d}  \bigg)
\end{equation*}
defines an unbounded $A\hat\otimes C\ell_{0,d}$-$B$ Kasparov module. 
The $C\ell_{0,d}$-action is generated by the operators $\rho^j$.
In the complex case we have $\C$ in place of $\R$ in the above formula.
\end{prop}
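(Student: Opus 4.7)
The plan is to verify the three defining conditions of an unbounded Kasparov module: that $D=\sum_j X_j\otimes\gamma^j$ is odd, self-adjoint and regular; that $[D,\pi(a)]_\pm$ is bounded for all $a$ in the dense subalgebra $\calA\hat\otimes C\ell_{0,d}$; and that $\pi(a)(1+D^2)^{-1/2}$ is compact. The representation of $\calA$ on $\ell^2(\Z^d,B)$ by adjointable operators is already established in Proposition \ref{prop:left_action_adjointable}; the $C\ell_{0,d}$-action acts only on the $\bigwedge^*\R^d$ factor, so adjointability and gradedness of the full representation of $\calA\hat\otimes C\ell_{0,d}$ follow. Since $\rho^j$ acts on $\bigwedge^*\R^d$ and graded-commutes with $\gamma^j$, and $X_j$ commutes with the Clifford action, we get $[D,1\hat\otimes\rho^j]_\pm=0$, so it suffices to verify the commutator and compactness conditions on elements of the form $a\hat\otimes 1$ with $a\in\calA$.

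For self-adjointness and regularity of $D$, I would take the algebraic span of $\{e_m\otimes w:m\in\Z^d,\,w\in\bigwedge^*\R^d\}$ as the initial core; on this core $D$ acts diagonally with real eigenvalues $\sum_j m_j\gamma^j$ (each summand having eigenvalues $\pm|m|$), and essential self-adjointness and regularity on $\ell^2(\Z^d,B)\otimes\bigwedge^*\R^d$ follow by a standard argument using the frame $\{e_m\}_{m\in\Z^d}$ and the fact that the Clifford matrices $\sum_j m_j\gamma^j$ are bounded self-adjoint on $\bigwedge^*\R^d$. Oddness with respect to $\gamma_{\bigwedge^*\R^d}$ is clear since each $\gamma^j$ is odd.

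For bounded commutators, it is enough to check on the generators $S^nb\in\calA$. A direct computation on the frame vectors gives
\[
X_j\cdot (S^nb)(\delta_m\otimes b')=(n_j+m_j)\,(S^nb)\cdot(\delta_m\otimes b'),\qquad (S^nb)\cdot X_j(\delta_m\otimes b')=m_j(S^nb)(\delta_m\otimes b'),
\]
so $[X_j,S^nb]=n_j\,S^nb$, which is bounded. Consequently $[D,S^nb\hat\otimes 1]=\sum_j n_j\,S^nb\hat\otimes\gamma^j$ is bounded on the core and hence extends to an adjointable operator on the module.

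The main obstacle is verifying the compactness of $\pi(a)(1+D^2)^{-1/2}$ in $\End_B^0$. The key observation is that on the frame $\{e_m\otimes w_I\}$ indexed by $m\in\Z^d$ and an orthonormal basis $\{w_I\}$ of $\bigwedge^*\R^d$, $D^2$ acts as multiplication by $|m|^2$, so
\[
(1+D^2)^{-1/2}=\sum_{m\in\Z^d}(1+|m|^2)^{-1/2}\,\Theta_{e_m,e_m}\otimes\mathrm{Id}_{\bigwedge^*\R^d},
\]
with the sum converging strictly since each $\Theta_{e_m,e_m}\otimes\mathrm{Id}$ is a rank-one-type compact endomorphism (in fact a finite-rank projection in $\End_B^0$). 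For $a\in\calA$ a finite sum of generators $S^nb$, I would show that $a\,\Theta_{e_m,e_m}\otimes\mathrm{Id}=\Theta_{a\cdot e_m,e_m}\otimes\mathrm{Id}\in\End_B^0$, and then estimate the partial sums
\[
\Big\|a\sum_{|m|>N}(1+|m|^2)^{-1/2}\Theta_{e_m,e_m}\otimes\mathrm{Id}\Big\|\longrightarrow 0\qquad(N\to\infty),
\]
using that a generator $S^nb$ shifts the frame index by $n$ so the above partial sums remain rank-increasing compact endomorphisms whose operator norm is controlled by $\sup_{|m|>N}(1+|m|^2)^{-1/2}\to 0$. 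This shows $a(1+D^2)^{-1/2}\in\End_B^0$ for $a\in\calA$, and density together with norm-continuity of the compact ideal completes the verification. The real versus complex distinction plays no role in any of these arguments, so the same proof covers both cases.
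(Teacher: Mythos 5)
Your proposal is correct and follows essentially the same route as the paper: the commutator computation $[X_j,S^nb]=n_j\,S^nb$ and the frame diagonalisation $(1+D^2)^{-1/2}=\sum_{m\in\Z^d}(1+|m|^2)^{-1/2}\Theta_{e_m,e_m}\otimes 1$ are exactly the paper's argument. The only (harmless) difference is that the paper observes this sum is already norm-convergent, so $(1+D^2)^{-1/2}$ itself is compact and multiplying by $a\in\calA$ is not needed; your tail estimate establishes the same fact.
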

\begin{proof}
The left-action of $A$ is adjointable by Proposition 
\ref{prop:left_action_adjointable}. By construction $D$ 
graded-commutes with the left Clifford representation. What 
remains to be checked is that $[D,a] = \sum_j [X_j,a]\otimes \gamma^j$ 
is adjointable for $a\in\calA$ and $(1+D^2)^{-1/2}$ is compact.
It is directly verified that
$$
[X_j,S^nb] = n_j S^nb
$$ 
and so we see that $[X_j,a]\in \calA$ for all $a\in\calA$. 
In particular the commutator is adjointable. Furthermore, 
$D^2 = \sum_j X_j^2\otimes 1_{\bigwedge^*\R^d}$ and hence $(1+D^2)^{-1/2} = (1+|X|^2)^{-1/2}\otimes 1_{\bigwedge^*\R^d}$,
where $|X|^2 = \sum_j X_j^2$. Using the frame $\{e_m\}_{m\in\Z^d}$, we
note that $(1+|X|^2)e_m = (1+|m|^2)e_m$ and so
$$ 
(1+D^2)^{-\frac12} = \sum_{m\in\Z^d}(1+ |m|^2)^{-\frac12} \Theta_{e_m,e_m} \otimes 1_{\bigwedge^*\R^d}, 
$$
which is a norm-convergent sum of finite-rank operators and so is compact.
\end{proof}

We call $\lambda^{(d)}$ the fundamental $K$-cycle of the $\Z^d$-action 
because of its similarity to Kasparov's fundamental class~\cite{KasparovNovikov}.

%%%%%%%%%%%%%%%%%%%%%%%%%%%%%%%%%%%%%%
\subsection{The extension $K$-cycle} \label{subsec:extension_module} 
Under mild assumptions on the twist $\theta$ (see~\cite{KR06}), we can unwind the 
crossed product $A= B\rtimes_{\alpha,\theta}\Z^d$ such that, for 
$\alpha = (\alpha^\|,\alpha_d)$ and $\alpha^\|$ the restricted action of $\Z^{d-1}$,
\begin{equation}\label{eq-crossed-iterated}
A =\left(B\rtimes_{\alpha^\|,\theta}\Z^{d-1}\right)\rtimes_{\alpha_d}\Z = C\rtimes_{\alpha_d}\Z
\end{equation}
where $C = B\rtimes_{\alpha^\|,\theta}\Z^{d-1}$. We link $C$ and 
$C\rtimes_{\alpha_d} \Z$ by the Toeplitz extension, which we briefly recall.

%%%%%% 
Very similar to the construction of the 
crossed product $C\rtimes_{\alpha_d}\Z$, we can consider
$C_{\alpha_d}\N$ the algebra given by finite sums $\sum_{k\in\N} \tilde{S}_d^kc_k  + (\tilde{S}_d^*)^k c_k' $, 
where $c_k,c_k'\in C$ and  $\tilde{S}_d$ is the 
operator such that 
\begin{align*}
 &\tilde{S}_d b = \alpha_d(b) \tilde{S}_d,  &&\tilde{S}_d^* b = \alpha^{-1}_d(b) \tilde{S}_d,  
    &&\tilde{S}_d^*\tilde{S}_d = 1,  &&\tilde{S}_d\tilde{S}_d^* = 1-p
\end{align*}
with $p=p^*=p^2$ a projection. 
Thus $\tilde{S}_d$ is  no longer unitary but an isometry. There is a 
unique $*$-algebra morphism $q:C_{\alpha_d}\N\to C_{\alpha_d}\Z$ 
determined by $ q(\hS_d) = S_d$ and is the identity on $C$. 
Its kernel is the ideal generated by $p$ 
which can easily be seen to be isomorphic to $F\otimes C$ 
where $F$ is the algebra of the finite rank operators. 
The exact sequence
$$
0\to  F\otimes C \to C_{\alpha_d}\N \stackrel{q}{\to} C_{\alpha_d}\Z \to 0
$$
is the algebraic version of the Toeplitz extension, the $C^*$-version is 
obtained by taking the universal $C^*$-closures. The $C^*$-closure 
of $C_{\alpha_d}\N$, denoted by $\calT(\alpha_d)$, is the Toeplitz 
algebra of the $\Z$-action ${\alpha_d}$ and the closure of $F\otimes C$ is 
$\calK\otimes C$, with $\calK$ the algebra 
of complex operators on a separable (real or complex) Hilbert space.
The short exact sequence
\begin{equation} \label{eq:general_Toeplitz_extension}
 0 \to \calK \otimes C \to \calT(\alpha_d) 
  \to C\rtimes_{\alpha_d}\Z \to 0
\end{equation}
gives rise to a class 
$[\mathrm{ext}]$ in the extension group $\mathrm{Ext}^{-1}(A, C)$, 
which by Proposition~\ref{prop:Ext_is_KK^1} is the same as the group 
$KKO^1(A, C)$ 
(or in the complex case $KK^1(A, C)$).
 
The extension class $[\mathrm{ext}]$ serves to 
compute boundary maps in $K$-theory and $K$-homology, namely by taking Kasparov products with it. 
In order to make these maps computable, we construct an unbounded 
representative of $[\mathrm{ext}]$.

\begin{prop}\label{prop:ext_class_is_ext_class}
Let $C$ be a separable and unital $C^*$-algebra and $A = C\rtimes_{\alpha_d} \Z$.
The extension class of the Toeplitz extension of 
Equation \eqref{eq:general_Toeplitz_extension}
is represented by the fundamental $K$-cycle of the $\Z$-action,
\begin{equation} \label{eq:ext_trip} 
 \left( C_{\alpha_d}\Z \hat\otimes C\ell_{0,1},\, \ell^2(\Z,C)_{C} \otimes 
  \bigwedge\nolimits^{\!\ast}\R,\, D=X_1\otimes \gamma^1 \,,
  \gamma_{\bigwedge^* \R}  \right).
\end{equation}
There is an analogous result for complex algebras.
\end{prop}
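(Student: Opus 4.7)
The plan is to compute the bounded transform of the unbounded cycle in Equation \eqref{eq:ext_trip}, exhibit it as a compact perturbation of a sign-type operator, and recognise the resulting bounded Kasparov module as the standard Wiener--Hopf (Busby) representative of $[\mathrm{ext}]$ under the isomorphism of Proposition \ref{prop:Ext_is_KK^1}.

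First I would compute $F_D$. Since $D = X_1 \otimes \gamma^1$ with $(\gamma^1)^2 = 1$, we have $D^2 = X_1^2 \otimes 1$ and $F_D = X_1(1+X_1^2)^{-1/2} \otimes \gamma^1$. The frame $\{e_m\}_{m \in \Z}$ diagonalises $X_1$ via the finite-rank endomorphisms $\Theta_{e_m, e_m}$, so the multiplier $x \mapsto x/\sqrt{1+x^2} - \mathrm{sign}(x)$, with the convention $\mathrm{sign}(0) := 1$, is a bounded function vanishing at infinity and therefore yields a norm-convergent sum of finite-rank endomorphisms. Hence $F_D$ differs from $(2P-1) \otimes \gamma^1$ by a compact endomorphism, where $P := \chi_{\Z_{\geq 0}}(X_1)$ is the projection onto $\ell^2(\Z_{\geq 0}, C) \subset \ell^2(\Z, C)$. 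The straight-line homotopy between these two operators is $C\ell_{0,1}$-equivariant and passes through valid bounded Kasparov cycles, so the $KKO$-class is preserved under this replacement.

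Next I would identify this replacement cycle with $[\mathrm{ext}]$. Compressing the left-regular representation of $A = C \rtimes_{\alpha_d} \Z$ on $\ell^2(\Z, C)$ by $P$ yields a completely positive map $\sigma: A \to \End_C(P\ell^2(\Z, C))$ splitting the quotient $q: \calT(\alpha_d) \to A$: the compressed shift $PS_dP$ satisfies the defining relations of the isometry $\tilde S_d$, and $1 - (PS_dP)(PS_d^*P)$ equals the rank-one projection $p$ generating $\calK \otimes C$. Consequently $\sigma(A) + \calK \otimes C = \calT(\alpha_d)$ and $\sigma(ab) - \sigma(a)\sigma(b) \in \calK \otimes C$, so $\sigma$ is the canonical Toeplitz lift and its Busby invariant represents $[\mathrm{ext}] \in \mathrm{Ext}_\R^{-1}(A, C)$. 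Under the isomorphism $\mathrm{Ext}_\R^{-1}(A, C) \cong KKO(A \hat\otimes C\ell_{0,1}, C)$, this invariant corresponds via the standard doubling construction to the $C\ell_{0,1}$-graded bounded cycle on $\ell^2(\Z, C) \otimes \bigwedge^{\!\ast}\R$ with grading $\gamma_{\bigwedge^* \R}$, odd off-diagonal operator built from $\gamma^1$ and the sign $2P - 1$, and $C\ell_{0,1}$-action generated by $\rho^1$, matching exactly the bounded cycle produced in the first step.

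The principal obstacle is sign bookkeeping: one must verify that the chosen conventions for $\rho^1$, the even/odd grading on $\bigwedge^{\!\ast}\R$, and the value of $\mathrm{sign}(0)$ combine to produce $[\mathrm{ext}]$ and not $-[\mathrm{ext}]$. The remaining analytic conditions (self-adjointness, regularity, adjointability of the $\calA$-action, and local compactness of the resolvent) follow from Proposition \ref{prop:general-toeplitz} specialised to $d = 1$.
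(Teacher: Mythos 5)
Your proposal is correct and takes essentially the same route as the paper: pass to the bounded transform, note it is a compact perturbation of $(2P-1)\otimes\gamma^1$ with $P=\chi_{[0,\infty)}(X_1)$, and identify the resulting cycle with the Toeplitz extension class via compression by $P$. The only difference is one of bookkeeping: where you verify the identification directly through the completely positive splitting $a\mapsto PaP$ and its Busby invariant, the paper instead cites \cite[Proposition 3.14]{RRS15} for the statement that $(A,\ell^2(\Z,C)_C,2P-1)$ represents the extension class of the sequence generated by $PC_{\alpha_d}\Z P$ and $\End^0_C[P(\ell^2(\Z,C))]\cong\calK\otimes C$, and then recognises that sequence as the Toeplitz extension.
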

It serves the clarity of arguments further down to denote 
$D = N\otimes\gamma_{\mathrm{ext}}$ for the extension module.
\begin{proof}
The Kasparov module from Equation \eqref{eq:ext_trip} 
defines the same $KK$-class as the bounded and ungraded cycle 
$$
\left(C \rtimes_{\alpha_d}\Z, \,\ell^2(\Z,C)_{C}, \,2P-1\right)
$$ 
where $P = \chi_{[0,\infty)}(N)$.
From~\cite[Proposition 3.14]{RRS15} we know that this class is the extension class of the 
short exact sequence
\begin{align*} %\label{eq:kas_mod_SES}
  0 \to \End_{C}^0[P(\ell^2(\Z, C))] \to 
    &C^*(P C_{\alpha_d}\Z P,\,\End_{C}^0[P(\ell^2(\Z, C))])  \to Q \to 0,
\end{align*}
where $C^*(P C_{\alpha_d}\Z P,\,\End_{C}^0[P(\ell^2(\Z, C))])$ 
is a closed subalgebra of the adjointable operators, 
$\End_{C}[P(\ell^2(\Z, C))]$, and 
$Q$ the quotient. Now $P$ is the projection onto 
$\ell^2(\N, C)$ and hence $P C_{\alpha_d}\Z P = C_{\alpha_d}\N$. 
Moreover, using~\cite[{\S}13]{Blackadar}
$$
 \End_{C}^0[P(\ell^2(\Z, C))] \cong \calK(\ell^2(\N)) \otimes C.
$$ 
Hence $C^*(P C_{\alpha_d}\Z P,\,\End_{C}^0[P(\ell^2(\Z, C))])$ 
is the completion of $C_{\alpha_d}\N$ and $Q$ is the completion of 
$C_{\alpha_d} \Z$ as required. 
\end{proof}

\subsection{The product with the extension $K$-cycle} 
\label{subsec:edge_module_and_product}
Suppose $d\geq 2$, then we can construct the fundamental $K$-cycles 
$\lambda^{(d-1)}$ and $\lambda^{(d)}$ representing classes in 
$KKO^{d-1}(B\rtimes_{\alpha^{\|},\theta}\Z^{d-1},B)$ 
and $KKO^{d}(B\rtimes_{\alpha,\theta}\Z^d,B)$ respectively. 
The extension class representing Equation \eqref{eq:general_Toeplitz_extension} 
gives a well-defined map 
$$
KKO^{1}(B\rtimes_{\alpha,\theta}\Z^d,C)\times KKO^{d-1}(C,B) \to KKO^d(B\rtimes_{\alpha,\theta}\Z^d,B)
$$
(or complex).
Our central result is that this Kasparov product of the 
extension $K$-cycle with the edge $K$-cycle $\lambda^{(d-1)}$ gives, 
up to a possible sign, $\lambda^{(d)}$,
the fundamental $K$-cycle of our original crossed product 
$A =B\rtimes_{\alpha,\theta}\Z^{d}$.
%%%%%%%%%%%%%%%%%%%%%%%%%%%%%%
\newcommand{\CO}{C}
%%%%%%%%%%%%%%%%%%%%%%%%%%%%
\begin{thm} \label{thm:KKO_factorisation}
Let $B$ be a separable and unital real or complex $C^*$-algebra with 
fundamental $K$-cycles $\lambda^{(d)}$ and $\lambda^{(d-1)}$ 
for (possibly twisted) $\Z^d$ and $\Z^{d-1}$-actions. 
Then the unbounded Kasparov product of the extension 
Kasparov module from Proposition \ref{prop:ext_class_is_ext_class} 
with $\lambda^{(d-1)}$ gives, up to a cyclic permutation of the 
Clifford generators, the fundamental $K$-cycle $\lambda^{(d)}$.
On the level of $KK$-classes this means 
$$
  [\ext] \hat\otimes_{C} [\lambda^{(d-1)}] = (-1)^{d-1} [\lambda^{(d)}],
$$
where $-[x]$ denotes the inverse of the $KK$-class.
\end{thm}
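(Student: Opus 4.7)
The plan is to build an explicit unbounded Kasparov module on the balanced tensor product of the modules underlying $[\ext]$ and $[\lambda^{(d-1)}]$, show it satisfies Kucerovsky's criteria (Theorem \ref{thm:Kucerovsky_criterion}), and then identify it with $\lambda^{(d)}$ up to a reordering of Clifford generators. Throughout I exploit the fact, noted in Section \ref{subsec:Kas_prod_intro}, that once one writes down the connection-corrected sum, self-adjointness, boundedness of commutators, and local compactness tend to fall out directly, leaving the positivity condition as the main thing to verify.

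First, I identify the balanced tensor product. Writing $E_1 = \ell^2(\Z,C) \otimes \bigwedge\nolimits^{\!\ast}\R$ and $E_2 = \ell^2(\Z^{d-1},B) \otimes \bigwedge\nolimits^{\!\ast}\R^{d-1}$, the map sending $(\delta_n \otimes c) \hat\otimes (\delta_m \otimes b)$ to $\delta_{(m,n)} \otimes \pi_{d-1}(c)(1_B \otimes b)$ (where $\pi_{d-1}$ is the representation of $C = B\rtimes_{\alpha^\|,\theta}\Z^{d-1}$ on $\ell^2(\Z^{d-1},B)$ from Proposition \ref{prop:left_action_adjointable}) identifies $E_1 \hat\otimes_C E_2$ with $\ell^2(\Z^d,B) \otimes \bigwedge\nolimits^{\!\ast}\R \otimes \bigwedge\nolimits^{\!\ast}\R^{d-1}$. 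Under the iterated crossed product decomposition $A = C \rtimes_{\alpha_d} \Z$, the natural left action of $\calA = B_{\alpha,\theta}\Z^d$ on this tensor product agrees with the one used to define $\lambda^{(d)}$.

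Second, I choose the \emph{Grassmann connection} on the dense $C_{\alpha_d}\Z$-submodule spanned by the frame $\{e_n = \delta_n \otimes 1_C\}_{n \in \Z}$ tensored with $\bigwedge\nolimits^{\!\ast}\R$; this connection is zero on the frame elements. With this choice, $1 \hat\otimes_\nabla D_2$ acts on an elementary tensor $e_n \otimes w \hat\otimes \xi$ as $(-1)^{|e_n \otimes w|} (e_n \otimes w) \hat\otimes D_2 \xi$, and the candidate operator for the product is
\begin{equation*}
D \;=\; X_d \otimes \gamma_{\ext} \otimes 1 \;+\; (-1)^{|\cdot|}\bigl(1 \otimes D_2\bigr)
\;=\; X_d \otimes \gamma_{\ext} \otimes 1 \;+\; \sum_{j=1}^{d-1} X_j \otimes \gamma_{\bigwedge^*\R} \otimes \gamma^j .
\end{equation*}
Self-adjointness and regularity follow because both summands act as commuting self-adjoint regular operators on their respective factors (with appropriate gradings); bounded commutators with $\calA$ follow from $[X_j, S^n b] = n_j S^n b$ as in the proof of Proposition \ref{prop:general-toeplitz}; and local compactness reduces to $(1 + |X|^2)^{-1/2}$ being compact on $\ell^2(\Z^d,B)$, which again follows from the frame sum argument there.

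Third, I verify Kucerovsky's three conditions. For the domain condition, $\Dom(D) \subset \Dom(X_d \otimes \gamma_{\ext} \otimes 1)$ because the two summands are essentially self-adjoint on a common core (finite sums of frame elements tensored with $\bigwedge\nolimits^{\!\ast}\R \otimes \bigwedge\nolimits^{\!\ast}\R^{d-1}$). For the connection condition, the creation operator $T_{e_1}$ with $e_1 \in \ell^2(\Z,C) \otimes \bigwedge\nolimits^{\!\ast}\R$ of finite support interacts with $D$ only through $1 \hat\otimes_\nabla D_2$; since $\nabla$ is the Grassmann connection on the chosen submodule, the commutator $[D, T_{e_1}] - T_{[D_1,e_1]}$ collapses to a finite sum of bounded operators coming from $[D_2, \pi_{d-1}(c)]$ applied to frame coefficients, all of which are bounded by the $\lambda^{(d-1)}$ hypotheses. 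The positivity condition $((X_d \otimes \gamma_{\ext} \otimes 1)e \mid De) + (De \mid (X_d \otimes \gamma_{\ext} \otimes 1)e) \geq K(e \mid e)$ reduces to showing that the cross term involves an anticommutator of graded operators that either vanishes or is bounded below; this is the main obstacle, and it is handled by observing that $X_d \otimes \gamma_{\ext} \otimes 1$ graded-commutes with $1 \otimes D_2$ in the appropriate sense, so the cross terms give $2 X_d^2 \otimes 1 \geq 0$ on the nose.

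Finally, I match the result with $\lambda^{(d)}$. The total Clifford action on $\bigwedge\nolimits^{\!\ast}\R \otimes \bigwedge\nolimits^{\!\ast}\R^{d-1}$ generated by the extension Clifford generator and the $C\ell_{0,d-1}$-action on the right is a $C\ell_{0,d}$-action, but the position of the extension generator is the \emph{first} Clifford generator, whereas in $\lambda^{(d)}$ it sits in the $d$-th slot. Cyclically permuting this generator past the remaining $d-1$ Clifford generators produces an isomorphism of $C\ell_{0,d}$-representations introducing the sign $(-1)^{d-1}$, since each transposition of adjacent anticommuting generators flips one sign. This yields the identification at the level of $KK$-classes, giving $[\ext] \hat\otimes_C [\lambda^{(d-1)}] = (-1)^{d-1}[\lambda^{(d)}]$. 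The delicate step is the positivity condition and the careful bookkeeping of the Clifford reordering sign; once those are dispatched, the rest follows from the recipe outlined after Theorem \ref{thm:Kucerovsky_criterion}.
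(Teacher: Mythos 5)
Your overall route is the one the paper itself takes: identify the balanced tensor product with $\ell^2(\Z^d,B)\otimes\bigwedge\nolimits^{\!*}\R\,\hat\otimes\bigwedge\nolimits^{\!*}\R^{d-1}$, use the frame (Grassmann) connection so that $1\otimes_\nabla X_j$ differs from $X_j$ only by the commutator term $[X_j,c]$, check Kucerovsky's criteria, and then compare with $\lambda^{(d)}$ up to the placement of the extension generator. The positivity argument is fine in substance: $N\otimes\gamma_{\mathrm{ext}}\hat\otimes 1$ and $1\hat\otimes_\nabla D_2$ anticommute, so the cross terms vanish and one is left with $2\|(N\otimes\gamma_{\mathrm{ext}}\hat\otimes 1)e\|^2\geq 0$, which is what your ``$2X_d^2$ on the nose'' amounts to. Two smaller points of care: your formula $(\delta_n\otimes c)\hat\otimes(\delta_m\otimes b)\mapsto\delta_{(m,n)}\otimes\pi_{d-1}(c)(1_B\otimes b)$ does not typecheck as written; the intended map $\delta_k\otimes c\hat\otimes\xi\mapsto\delta_k\otimes\pi_{d-1}(c)\xi$ is correct and coincides with the paper's $\delta_k\otimes c\otimes\delta_n\otimes b\mapsto\alpha_d^k(c)\cdot\delta_{(n,k)}\otimes b$ (the twist $\alpha_d^k$ cancels the $\alpha_d^{-k}$ appearing in the slice representation), but the compatibility with the left action of $S_d$ and of $C\subset A$, and the identifications $N\mapsto X_d$, $1\otimes_\nabla X_j\mapsto X_j$, are exactly where the unwinding $A\cong C\rtimes_{\alpha_d}\Z$ is used and should be verified rather than asserted.

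The genuine gap is the final sign. You claim that cyclically permuting the extension generator past the other $d-1$ generators ``produces an isomorphism of $C\ell_{0,d}$-representations introducing the sign $(-1)^{d-1}$.'' As stated this cannot be right: a genuine isomorphism (unitary equivalence) of Kasparov modules forces equality of $KK$-classes with no sign at all. Relabelling the generators is an automorphism of $C\ell_{0,d}$ (and of $C\ell_{d,0}$), not a unitary equivalence of the two modules, and the real content of the last step is to determine how pulling back along this automorphism changes the class. The input used in the paper is Kasparov's result (\cite{Kasparov80}, \S 5, Theorem 3): a generator-permuting automorphism acts trivially on $KKO$ when it preserves the orientation $\omega_{C\ell_{0,d}}=\rho^1\cdots\rho^d$ and by inversion when it reverses it; since the cyclic permutation multiplies $\omega_{C\ell_{0,d}}$ by $(-1)^{d-1}$, one concludes $[\ext]\hat\otimes_C[\lambda^{(d-1)}]=(-1)^{d-1}[\lambda^{(d)}]$. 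Counting transpositions only establishes the parity of the permutation; it does not by itself convert an odd permutation of Clifford generators into inversion of a $KK$-class --- that passage requires a homotopy-theoretic argument, and indeed the paper's remark following the theorem stresses that no explicit unitary equivalence implements the permutation, which is precisely why the conclusion is only a signed equality of classes rather than an equivalence of Kasparov modules. Supplying this citation (or an equivalent homotopy argument) closes the gap; the rest of your proposal is sound.
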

%%%%%%%%%%%%%%%%%%%%%%%%%%%%%%
\begin{proof}
We will focus on the real setting though note that the case of complex algebras and 
modules follows the same argument. We denote 
by $C = B\rtimes_{\alpha^\|,\theta}\Z^{d-1}$ and 
$A = B\rtimes_{\alpha,\theta}\Z^d \cong C \rtimes_{\alpha_d} \Z$ 
with dense subalgebras $\calC=B_{\alpha^\|,\theta}\Z^{d-1}$ and 
$\calA=B_{\alpha,\theta}\Z^d$.
We are taking the product of an $A\hat\otimes C\ell_{0,1}$-$C$ Kasparov module 
with a $C \hat\otimes C\ell_{0,d-1}$-$B$ Kasparov module. Our first step is an 
external product of the $A \hat\otimes C\ell_{0,1}$-$C$ Kasparov module 
with the identity class in $KKO(C\ell_{0,d-1},C\ell_{0,d-1})$.
This class can be represented by the Kasparov module
$$ 
 \left( C\ell_{0,d-1}, \left(C\ell_{0,d-1}\right)_{C\ell_{0,d-1}},0,\gamma_{C\ell_{0,d-1}}\right) 
$$
with right and left actions given by right and left Clifford multiplication. 
The external product results in the 
$A\hat\otimes C\ell_{0,d}$-$C\hat\otimes C\ell_{0,d-1}$ Kasparov module represented by 
\begin{align*}
 &\bigg(\calA \hat\otimes C\ell_{0,d},\left(\ell^2(\Z,C)\otimes \bigwedge\nolimits^{\!*}\R\, \hat\otimes 
  C\ell_{0,d-1}\right)_{C \hat\otimes C\ell_{0,d-1}}, 
    N \otimes \gamma_{\mathrm{ext}} \hat\otimes 1, \,
  \gamma_{\bigwedge^*\R} \hat\otimes \gamma_{C\ell_{0,d-1}} \bigg).
\end{align*}
We now take the internal product of this module with the $K$-cycle for 
twisted $\Z^{d-1}$-actions,  $[\lambda^{(d-1)}] \in KKO(C\hat\otimes C\ell_{0,d-1},B)$ 
and represented by the unbounded Kasparov module
$$
 \bigg( \calC\hat\otimes C\ell_{0,d-1},\, \ell^2(\Z^{d-1},B) 
   \otimes \bigwedge\nolimits^{\!*}\R^{d-1},\, \sum_{j=1}^{d-1} X_j \otimes \gamma^j, \,
     \gamma_{\bigwedge^*\R^{d-1}} \bigg) .
$$
We start with the balanced tensor product of $C^*$-modules, where
\begin{align*}
 &\left( \ell^2(\Z,C)\otimes_\R \bigwedge\nolimits^{\!*}\R \,\hat\otimes_\R \, C\ell_{0,d-1}\right) 
    \hat\otimes_{C\hat\otimes C\ell_{0,d-1}} 
      \left(\ell^2(\Z^{d-1},B) \otimes_\R \bigwedge\nolimits^{\!*}\R^{d-1} \right) \\
 &\hspace{0.3cm}  \cong \left( \ell^2(\Z, C)\otimes_{C} \ell^2(\Z^{d-1},B)\right) 
   \otimes_\R \bigwedge\nolimits^{\!*}\R \,\hat\otimes_\R 
      \left(C\ell_{0,d-1}\cdot \bigwedge\nolimits^{\!*}\R^{d-1} \right) \\
 &\hspace{0.3cm} \cong \left( \ell^2(\Z, C)\otimes_{C} \ell^2(\Z^{d-1},B)\right)
 \otimes_\R \bigwedge\nolimits^{\!*}\R\, \hat\otimes_\R \bigwedge\nolimits^{\!*}\R^{d-1}
\end{align*}
as the action of $C\ell_{0,d-1}$ on $\bigwedge^*\R^{d-1}$ by left-multiplication is nondegenerate.

Next we define $1\otimes_\nabla X_j$ on the dense 
submodule $\ell^2(\Z, \calC)\otimes_{\calC} \ell^2(\Z^{d-1},B)$ 
for all $j\in\{1,\ldots,d-1\}$, where
\begin{align} \label{eq:twisted_dirac}
   &(1\otimes_\nabla X_j)\left(\delta_k \otimes c \otimes \delta_n \otimes b \right) 
   =  \delta_k \otimes c \otimes X_j \delta_n \otimes b
      + \delta_k\otimes 1\otimes [X_j,c]\delta_n \otimes b  
\end{align}
with $\{\delta_k\}_{k\in\Z}$ an orthonormal basis of $\ell^2(\Z)$, $c \in \calC$, 
$\{\delta_n\}_{n\in\Z^{d-1}}$ an orthonormal basis for $\ell^2(\Z^{d-1})$ and $b\in B$. 
The operator is well-defined as $[X_j,c]\in\calC$ for $c\in\calC$. Then
\begin{align} \label{eq:prod_module_candidate}
  &\bigg( \calA \hat\otimes C\ell_{0,1}\hat\otimes C\ell_{0,d-1},\, 
     \left( \ell^2(\Z, C)\otimes_{C} \ell^2(\Z^{d-1},B)\right) 
     \otimes \bigwedge\nolimits^{\!*}\R\, \hat\otimes_\R \bigwedge\nolimits^{\!*}\R^{d-1}, \\
  &\hspace{7cm}  N\otimes 1 \otimes \gamma_{\mathrm{ext}}\hat\otimes 1 
     + \sum_{j=1}^{d-1}(1\otimes_\nabla X_j)\otimes 1\hat\otimes \gamma^j \bigg) \nonumber
\end{align}
is a candidate for the unbounded Kasparov module representing the product, 
where the Clifford actions take the form
\begin{align*}
   \rho_{\mathrm{ext}}\hat\otimes 1(\omega_1\hat\otimes\omega_2) &= (e_1\wedge \omega_1 - \iota(e_1)\omega_1)\hat\otimes \omega_2 \\
   1\hat\otimes \rho^j(\omega_1\hat\otimes\omega_2) &= (-1)^{|\omega_1|} \omega_1\hat\otimes(e_j\wedge \omega_2 - \iota(e_j)\omega_2),
\end{align*}
for $j\in\{1,\ldots,d-1\}$ with $|\omega_1|$ is the degree of the form $\omega_1$. 
Similar equations hold for $\gamma_\mathrm{ext}\hat\otimes 1$ and $1\hat\otimes\gamma^j$. 
Arguments very 
similar to the proof of Proposition \ref{prop:general-toeplitz} show that Equation 
\eqref{eq:prod_module_candidate} is a real or complex Kasparov module depending on what 
setting we are in.
We now briefly check 
Kucerovsky's criterion (Theorem \ref{thm:Kucerovsky_criterion}). The connection criterion 
holds precisely because we have used a connection $\nabla$ to construct $1\otimes_\nabla X_j$. 
The domain condition is a simple check and the positivity condition 
is explicitly checkable as the operators 
of interest act as number operators. Therefore the unbounded Kasparov module of Equation 
\eqref{eq:prod_module_candidate}
is an unbounded representative of the product $[\text{ext}]\hat\otimes_{C}[\lambda^{(d-1)}]$ 
at the level of $KK$-classes.

Our next task is to relate the Kasparov module of Equation \eqref{eq:prod_module_candidate} to 
$\lambda^{(d)}$. 
We first identify  $\bigwedge^*\R\, \hat\otimes_\R \bigwedge^* \R^{d-1} \cong \bigwedge^*\R^d$ 
and use the graded isomorphism $C\ell_{p,q}\hat\otimes C\ell_{r,s} \cong C\ell_{p+r,q+s}$ 
from~\cite[{\S}2.16]{Kasparov80}
on the left and right Clifford generators by the mapping
\begin{align*}
   &\rho_{\mathrm{ext}}\hat\otimes 1 \mapsto \rho^1,  &&1\hat\otimes \rho^j \mapsto \rho^{j+1}, \\
   &\gamma_{\mathrm{ext}} \hat\otimes 1 \mapsto \gamma^1,  &&1\hat\otimes \gamma^j \mapsto \gamma^{j+1}.
\end{align*} 
Therefore applying this isomorphism gives the unbounded Kasparov module
\begin{align*} 
 &\bigg( \calA \hat\otimes C\ell_{0,d}, \,(\ell^2(\Z, C)\otimes_{C} \ell^2(\Z^{d-1},B)) 
 \otimes \bigwedge\nolimits^{\!*}\R^d, \,
    N\otimes 1 \otimes \gamma^1 + 
  \sum_{j=1}^{d-1} (1\otimes_\nabla X_{j}) \otimes \gamma^{j+1} \bigg)
\end{align*}
with $C\ell_{0,d}$-action generated by $\rho^j(\omega) = e_j\wedge\omega -\iota(e_j)\omega$ and 
$C\ell_{d,0}$-action generated by $\gamma^j(\omega) = e_j\wedge\omega +\iota(e_j)\omega$ for 
$\omega\in\bigwedge^*\R^d$ and $\{e_j\}_{j=1}^d$ the standard basis of $\R^d$.

Next we define the map $\ell^2(\Z, C)\otimes_{C} \ell^2(\Z^{d-1},B) 
\to \ell^2(\Z^d, B)$ on generators as
$$
  \delta_k \otimes c \otimes \delta_n \otimes b \mapsto \alpha_d^k(c)\cdot \delta_{(n,k)}\otimes b
$$
for $k\in\Z$, $n\in\Z^{d-1}$ and $\alpha_d$ the automorphism on $C$ such that 
$A = C\rtimes_{\alpha_d}\Z$.
One easily checks that this map is compatible with the right-action of $B$ and $B$-valued 
inner product and, thus, is a unitary map on $C^*$-modules. We check compatibility with the 
left action by $A \cong C^*(C, S_d)$, where for $c_1,c_2\in C$,
\begin{align*}
  c_1(\delta_k \otimes c_2 \otimes \delta_n \otimes b) &= \delta_k \otimes \alpha_d^{-k}(c_1) c_2
    \otimes \delta_n \otimes b \\
   &\mapsto \alpha_d^k\!\left(\alpha_d^{-k}(c_1)c_2\right) \cdot \delta_{(n,k)}\otimes b \\
   &= c_1 \left(\alpha_d^k(c_2)\cdot \delta_{(n,k)}\otimes b \right).
\end{align*}
Furthermore,
\begin{align*}
  S_d(\delta_k \otimes c \otimes \delta_n \otimes b) &= \delta_{k+1} \otimes c \otimes \delta_n \otimes b \\
    &\mapsto \alpha_d^{k+1}(c)\cdot \delta_{(n,k+1)}\otimes b \\
    &= S_d \alpha_d^k(c) S_d^* \cdot \delta_{(n,k+1)} \otimes b \\
    &= S_d\left( \alpha_d^k(c) \cdot \delta_{(n,k)} \otimes b \right)
\end{align*}
and so the generators of the left-action are compatible with the isomorphism. 
It is then straightforward to check that 
the compatibility of the left-action extends to the $C^*$-completion. Next we compute that 
\begin{align*}
  N(\delta_k \otimes c \otimes \delta_n \otimes b) &= k \delta_k \otimes c \otimes \delta_n \otimes b \\
    &\mapsto k \alpha_d^k(c)\cdot \delta_{(n,k)}\otimes b = X_d\left( \alpha_d^k(c)\cdot \delta_{(n,k)}\otimes b\right)
\end{align*}
as $\alpha_d^k(c)$ commutes with $X_d$. Lastly we use Equation \eqref{eq:twisted_dirac} 
to check that
\begin{align*}
  (1\otimes_\nabla X_j)(\delta_k \otimes c \otimes \delta_n \otimes b) 
   &=  \delta_k \otimes (c + [X_j,c]) \otimes \delta_n \otimes b  \\
   &\mapsto  \alpha_d^k(c + [X_j,c])\cdot \delta_{(n,k)} \otimes b \\
   &= \alpha_d^k(X_j c)\cdot \delta_{(n,k)} \otimes b \\
   &= X_j\left( \alpha_d^k(c)\cdot \delta_{(n,k)}\otimes b \right)
\end{align*}
as $[X_j,c]\in\calC$ for $c\in \calC$ 
and $\alpha_d^k(X_j) = X_j$ for $j\in\{1,\ldots,d-1\}$.
We note that the identification $1\otimes_\nabla X_j \mapsto X_j$ requires us to use the 
smooth subalgebra $\calC \subset C$ in order for $[X_j,c]$ 
to be a well-defined element of $C$. This is typical of the unbounded product and why we need to have the equality 
$\ol{\calC \cdot \ell^2(\Z^{d-1},B)} = \ell^2(\Z^{d-1},B)$.

To summarise, the unbounded Kasparov module representing the product 
is unitarily equivalent to 
\begin{equation} \label{eq:prod_module_simplified}
  \bigg( \calA\hat\otimes C\ell_{0,d},\, \ell^2(\Z^d, B)_B
 \otimes \bigwedge\nolimits^{\!*}\R^d,\, X_d\otimes \gamma^1 + \sum_{j=1}^{d-1} X_{j} \otimes \gamma^{j+1}, \, \gamma_{\bigwedge^*\R^d} \bigg)
\end{equation}
with left and right Clifford actions as previously. This is almost the same as the 
fundamental $K$-cycle $\lambda^{(d)}$, the only difference being the
labelling 
of the Clifford basis. The map $\xi(\gamma^j)= \gamma^{\sigma(j)}$ and 
$\xi(\rho^j) = \rho^{\sigma(j)}$ for $\sigma(j) = (j-1)\mathrm{mod}\,d$ is potentially 
an orientation reversing map on Clifford algebras. Taking the canonical orientation 
 $\omega_{C\ell_{0,d}}=\rho^1\cdots\rho^d$ on $C\ell_{0,d}$, 
$$
  \xi(\omega_{C\ell_{0,d}}) = \rho^d \rho^1\cdots\rho^{d-1} = (-1)^{d-1} \rho^1\cdots \rho^d 
   = (-1)^{d-1} \omega_{C\ell_{0,d}},
$$
and similarly for the $\gamma^j$ and $C\ell_{d,0}$. Using~\cite[{\S}5, Theorem 3]{Kasparov80}, such 
a map on Clifford algebras will send the $KK$-class of the Kasparov module of Equation 
\eqref{eq:prod_module_simplified} to its inverse if $d$ is even or leaves 
the $KK$-class invariant if $d$ is odd. 
Therefore at the level 
of $KK$-classes, $[\mathrm{ext}]\hat\otimes_{C} [\lambda^{(d-1)}] = (-1)^{d-1}[\lambda^{(d)}]$ 
as required.
\end{proof}

\begin{remark}
It would be preferable to have an explicit unitary equivalence implementing
the coordinate permutation of Clifford indices as existing proofs demonstrating the 
signed equality arising from a permutation require (explicit) homotopies. The difficulty
of defining reasonable equivalence relations on unbounded Kasparov modules stronger
than `unitary equivalence modulo bounded perturbation' means that such homotopies
compromise the computability and any interpretation stronger than mere (signed) equality of
$KK$-classes.
\end{remark}

\begin{cor}
The pairing of a $K$-theory class $[z]\in KO_j(B\rtimes_{\alpha,\theta}\Z^d)$ 
(or complex) with $\lambda^{(d)}$ is, up to a sign, the same as the 
pairing of $\partial[z] \in KO_{j-1}(B\rtimes_{\alpha^\|,\theta}\Z^{d-1})$ 
with $\lambda^{(d-1)}$.
\end{cor}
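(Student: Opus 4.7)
The plan is to deduce the corollary from Theorem~\ref{thm:KKO_factorisation} combined with associativity of the internal Kasparov product, once we translate the $K$-theoretic pairings and the boundary map into Kasparov language.

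First, I would realise both sides in $KKO$. By Proposition~\ref{prop:real_Real_k_with_kasparov_equivalence}, the class $[z]\in KO_j(A)$ with $A=B\rtimes_{\alpha,\theta}\Z^d$ corresponds to an element $[z]\in KKO(C\ell_{j,0},A)$, and the pairing with the fundamental $K$-cycle $\lambda^{(d)}$ is the internal Kasparov product $[z]\hat\otimes_A[\lambda^{(d)}] \in KKO(C\ell_{j,0}\hat\otimes C\ell_{0,d},B)$; after the standard Clifford identification this lies in $KO_{j-d}(B)$. Similarly $\partial[z]\hat\otimes_C[\lambda^{(d-1)}]$ naturally belongs to $KKO(C\ell_{j-1,0}\hat\otimes C\ell_{0,d-1},B)$, which sits in the same group $KO_{j-d}(B)$ after Clifford identification, so the two sides are at least comparable.

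Second, I would invoke the classical identification, valid in the real setting (see~\cite[\S 7]{Kasparov80}), that the boundary map $\partial:KO_j(A)\to KO_{j-1}(C)$ associated to the Toeplitz extension~\eqref{eq:general_Toeplitz_extension} is implemented by the right internal Kasparov product with the extension class: $\partial[z]=[z]\hat\otimes_A[\mathrm{ext}]$ in $KKO(C\ell_{j,0},C\hat\otimes C\ell_{0,1})\cong KO_{j-1}(C)$. By Proposition~\ref{prop:ext_class_is_ext_class}, $[\mathrm{ext}]$ is represented by the unbounded Kasparov module already in hand.

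The conclusion is then immediate from associativity of the Kasparov product together with Theorem~\ref{thm:KKO_factorisation}:
\begin{align*}
[z]\hat\otimes_A[\lambda^{(d)}] &= (-1)^{d-1}\,[z]\hat\otimes_A\bigl([\mathrm{ext}]\hat\otimes_C[\lambda^{(d-1)}]\bigr) \\
&= (-1)^{d-1}\,\bigl([z]\hat\otimes_A[\mathrm{ext}]\bigr)\hat\otimes_C[\lambda^{(d-1)}] \\
&= (-1)^{d-1}\,\partial[z]\hat\otimes_C[\lambda^{(d-1)}].
\end{align*}
There is no real obstacle left, since the work has been done in establishing Theorem~\ref{thm:KKO_factorisation}; the only points to verify are that associativity of $\hat\otimes$ and the identification of $\partial$ with right multiplication by $[\mathrm{ext}]$ both hold in $KKO$ (they do), and that the sign $(-1)^{d-1}$, coming from the permutation of Clifford basis vectors in Theorem~\ref{thm:KKO_factorisation}, is inherited unchanged. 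An analogous argument, replacing $KKO$ by $KK$, handles the complex case.
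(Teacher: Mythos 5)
Your argument is correct and is essentially identical to the paper's own proof: both deduce the statement by writing $[z]\hat\otimes_A[\lambda^{(d)}] = (-1)^{d-1}\bigl([z]\hat\otimes_A[\mathrm{ext}]\bigr)\hat\otimes_C[\lambda^{(d-1)}]$ via Theorem~\ref{thm:KKO_factorisation}, associativity of the Kasparov product, and the fact that the product with $[\mathrm{ext}]$ implements the $K$-theory boundary map. The extra bookkeeping you include (realising $KO_j$ as $KKO(C\ell_{j,0},\cdot)$ and matching degrees) is a harmless elaboration of what the paper leaves implicit.
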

\begin{proof}
Using Theorem \ref{thm:KKO_factorisation} and associativity of the 
Kasparov product,
\begin{align*}
  [z]\hat\otimes_A [\lambda^{(d)}] &= (-1)^{d-1}[z]\hat\otimes_A \big([\mathrm{ext}] 
    \hat\otimes_{C} [\lambda^{(d-1)}] \big) \\
    &= (-1)^{d-1} \big( [z]\hat\otimes_A [\mathrm{ext}] \big) 
      \hat\otimes_{C}  [\lambda^{(d-1)}] \\
    &= (-1)^{d-1} \partial[z] \hat\otimes_{C}  [\lambda^{(d-1)}]
\end{align*}
as the product with $[\mathrm{ext}]$ implements the boundary map in 
$K$-theory.
\end{proof}
Similarly if $d=1$ and $A=B\rtimes_\beta\Z$, then $[\lambda^{(1)}] = [\mathrm{ext}]$ and 
$[z]\hat\otimes_A [\lambda^{(1)}] = \partial[z]$ which is in $KO_{j-1}(B\otimes\calK)$, 
the $K$-theory of the `$0$-dimensional boundary'. 
We will return to such pairings in Section \ref{sec:pairings_and_bulkedge}.

\section{The disordered bulk-edge correspondence in $K$-homology} \label{sec:disorder_bulkedge}
We now apply our results about fundamental $K$-cycles of crossed products and the 
Pimsner--Voiculescu exact sequence to the case of $C^*$-algebras 
modelling disordered or aperiodic topological phases.

\subsection{The algebras and the extension for topological insulators}
\label{sec:disorder_algebra_and_ext}

\subsubsection{The bulk algebra}
The bulk algebra is the observable algebra of the solid seen as infinite and without boundary. 
We denote it by $A_b$.
It is also referred to as the noncommutative Brillouin zone and was  developed 
in~\cite{BelGapLabel, Bellissard94}. 
We work in the tight binding approximation where the configuration space is $\Z^d$. 
In this case the bulk algebra is $A_b=C(\Omega,M_N)\rtimes_{\alpha,\theta} \Z^d$,
the twisted crossed product algebra of
matrix-valued continuous functions over a compact space $\Omega$ by the group $\Z^d$.

The space $\Omega$ may be thought of as a space of disorder configurations. 
It is equipped with a topological structure and a continuous action $\alpha$ 
of $\Z^d$ by shift of the configuration. One usually assumes that $\Omega$
possesses a probability measure that is invariant and ergodic under 
the $\Z^d$-action to obtain expectation values. 
The space $\Omega$ could be taken to be a point, or contractible to a point. 
The latter point of view is taken in \cite{PSBbook} but this excludes the interesting 
possibility of describing for instance quasicrystals.
The algebra of $N\times N$ matrices, $M_N$, is used to incorporate 
internal degrees of freedom like spin. Whereas these internal 
components are important for the implementation of certain symmetries, 
for instance odd time reversal, they do not interfere with the main 
topological constructions which will follow and, in 
order not to overburden the notation, we will suppress them.

An external magnetic field can be incorporated by a two 
cocycle $\theta:\Z^d\times\Z^d \to \calU(C(\Omega))$ to twist 
the action, but note that the existence of anti-linear symmetries 
puts restrictions on the form of such a magnetic field. Indeed, 
for systems with symmetries which are implemented by anti-linear 
operators we are bound to look at a real subalgebra of the above algebra. 
In this case we consider therefore $C(\Omega)$ as the 
algebra of real valued continuous functions and exclude the 
possibility of complex-valued twisting cocycles. The presence of a boundary
will impose further constraints on any possible cocycle.

Of importance for our application to physics is the family of representations  
$\{\pi_\omega\}_{\omega\in\Omega}$ which are induced by the 
evaluation representations $\mathrm{ev}_\omega:C(\Omega)\to\R$ (or $\C$). 
These $\pi_\omega$ are thus representations of $C(\Omega)\rtimes_{\alpha,\theta}\Z^d$
on $\ell^2(\Z^d)$. They look as follows 
on $C(\Omega)_{\alpha,\theta}\Z^d$, a dense subalgebra of the 
bulk algebra,
\begin{align*} %\label{eq:crossed_product_rep}
 &\pi_\omega(f)\delta_n = f(\alpha^{n}(\omega))\delta_n, 
 &&\pi_\omega(S_i)\delta_n = \theta(n,e_i)(\omega)\delta_{n+e_i}
\end{align*}
with $f\in C(\Omega)$ and $\{S_i\}_{i=1}^d$ the unitary generators of the 
twisted $\Z^d$-action with $\{e_i\}_{i=1}^d$ the generators of $\Z^d$.
A tight binding Hamiltonian is a self-adjoint element of the algebraic crossed product
\begin{equation*}%\label{eq-Ham}
  h = \sum_{n\in\Z^d} {S}^n v_{n} .
\end{equation*}
Here the functions $v_n\in C(\Omega)$ are chosen such that the expression is 
self-adjoint and the sum has only finitely many non-zero terms. If we 
take internal degrees of freedom into account then the $v_n$ are matrix valued 
and $h\in M_k(C(\Omega)_{\alpha,\theta}\Z^d)$. 
The representations $\pi_\omega$ give rise to a family of Hamiltionians 
$H_\omega = \pi_\omega(h)$ which are unitarily equivalent for points 
$\omega$ in the same orbit. 

\subsubsection{The Toeplitz extension of the bulk algebra}

We require that the twisting cocycle $\theta$ can be arranged in such a way that 
$S_iS_d = S_dS_i$ for all $i\in\{1,\ldots,d\}$, something that often reduces 
to a choice of magnetic gauge. 
This assumption allows us to decompose the twisted $\Z^d$-action,
$\alpha = (\alpha^\|,\alpha_d)$ with $\alpha^\|$ a twisted $\Z^{d-1}$-action 
and $\alpha_d$ an untwisted $\Z$-action, i.e.,
$$
  C(\Omega)\rtimes_{\alpha,\theta}\Z^d 
  \cong \left(C(\Omega)\rtimes_{\alpha^\|,\theta} \Z^{d-1}\right) \rtimes_{\alpha_d} \Z.
$$
The bulk algebra can then be seen as a quotient algebra of the 
Toeplitz algebra $\calT(\alpha_d)$, which is interpreted as the observables 
on a system with boundary.
Indeed, the evaluation representation 
$\mathrm{ev}_\omega:C(\Omega)\to \R$ (or $\C$) induces a representation of 
$(C(\Omega)_{\alpha^\|,\theta}\Z^{d-1})_{\alpha_d}\N$ on $\ell^2(\N)\otimes\ell^2(\Z^{d-1})$. 
The tensor product Hilbert space is dense in $\ell^2(\Z^{d-1}\times\N)$ and 
$\Z^{d-1}\times\N$ is the configuration space for the 
system on a half-space. That is, the insulator seen as 
infinite with a boundary, the so-called edge, which is at 
$\Z^{d-1}\times\{0\}$. The representation of 
$(C(\Omega)_{\alpha^\|,\theta}\Z^{d-1})_{\alpha_d}\N$ looks far away from the edge like 
that of $(C(\Omega)_{\alpha^\|,\theta}\Z^{d-1})_{\alpha_d}\Z=C(\Omega)_{\alpha}\Z^d$, but at the edge the 
translation $\tilde{S}_d$ is truncated. 
The algebra $(C(\Omega)_{\alpha^\|,\theta}\Z^{d-1})_{\alpha_d}\N$, or rather its closure $\calT(\alpha_d)$ 
can therefore be seen as the observable algebra of the insulator with boundary.

\subsubsection{The edge algebra}
The projection $p = 1-\tilde{S}_d \tilde{S}_d^*$
generates a ideal in $\calT(\alpha_d)$.
In the evaluation representation this projection 
becomes the projection onto the subspace 
$\ell^2(\Z^{d-1}\times\{0\})$, and 
hence is naturally associated to the edge.  
The algebraic (non-closed) ideal is given by 
$F\otimes C(\Omega)_{\alpha^\|,\theta}\Z^{d-1}$ with 
$F$ the finite-rank operators perpendicular to the edge.
 Its $C^*$-closure
$\calK\otimes C(\Omega)\rtimes_{\alpha^\|,\theta}\Z^{d-1}$ 
is represented as operators which are 
localised near the edge and so is called the edge algebra.
We will use the notation $A_e =  C(\Omega)\rtimes_{\alpha^\|,\theta}\Z^{d-1}$ so that the edge 
algebra is the stabilisation of $A_e$. 

To summarize, the extension which gives rise to the bulk boundary correspondence for topological insulators is
$$
0\to \calK\otimes C(\Omega)\rtimes_{\alpha^\|,\theta}\Z^{d-1} \to\calT(\alpha_d)\to 
  C(\Omega)\rtimes_{\alpha,\theta}\Z^{d}\to 0.
$$

%%%%%%%%%%%%%%%%%%%%%%%%%%%%%%%%%%%%%%
\subsection{The fundamental $K$-cycles for the bulk and the edge}
%%%%%%%%%%%%%%%%%%%%%%%%%%%%%%%%%%%%%%
We apply the construction of Proposition \ref{prop:general-toeplitz} 
to $A_b=C(\Omega)\rtimes_{\alpha,\theta}\Z^{d}$ 
and to $A_e=C(\Omega)\rtimes_{\alpha^\|,\theta}\Z^{d-1}$ with their dense subalgebras
$\calA_b = C(\Omega)_{\alpha,\theta}\Z^d$ and 
$\calA_e = C(\Omega)_{\alpha^\|,\theta}\Z^{d-1}$. As a result we obtain the fundamental $K$-cycle for the bulk
\begin{equation} \label{eq:fund-cl-bulk} 
\lambda_b =  \left(\calA_b \hat\otimes C\ell_{0,d},\, \ell^2(\Z^{d},C(\Omega))_{C(\Omega)} \otimes 
  \bigwedge\nolimits^{\!\ast}\R^{d},\, D_b, \,
  \gamma_{\bigwedge^* \R^d}  \right), \quad D_b = \sum_{j=1}^d X_j\otimes \gamma^j,
\end{equation}
and the fundamental $K$-cycle of the edge
\begin{equation*}  %\label{eq:fund-cl-edge} 
\lambda_e = \left(\calA_e \hat\otimes C\ell_{0,d-1}, \, \ell^2(\Z^{d-1},C(\Omega))_{C(\Omega)} \! \otimes 
  \bigwedge\nolimits^{\!\ast}\R^{d-1}, \, D_e, \, 
  \gamma_{\bigwedge^* \R^{d-1}}  \right), \quad D_e =\sum_{j=1}^{d-1}X_j\otimes \gamma^j.
\end{equation*}
Hence we obtain unbounded representatives of 
elements $[\lambda_b] \in KKO^d(A_b,C(\Omega))$ and 
$[\lambda_e] \in KKO^{d-1}(A_e,C(\Omega))$ (in the real case, otherwise it's $KK$).

We apply Proposition \ref{prop:ext_class_is_ext_class} to $A_b = A_e\rtimes_{\alpha_d}\Z$ 
to obtain the extension cycle
\begin{equation*} %\label{eq:ext_trip} 
 \ext_{be} = \left( \calA_b \hat\otimes C\ell_{0,1},\, \ell^2(\Z,A_e)_{A_e} \otimes 
  \bigwedge\nolimits^{\!\ast}\R,\, D=X_1\otimes \gamma^1 \,,
  \gamma_{\bigwedge^* \R}  \right).
\end{equation*} 
Theorem~\ref{thm:KKO_factorisation} then yields the factorisation of the fundamental $K$-cycles 
for topological insulators,
\begin{equation*}
 [\ext_{be}] \hat\otimes [\lambda_e] = (-1)^{d-1} [\lambda_b].
\end{equation*}

%%%%%%%%%%%%%%%%%%%%%%%%%%%%%%%%%%
\subsection{From Kasparov module to spectral triple}
%%%%%%%%%%%%%%%%%%%%%%%%%%%%%%%%%%
Recall the map $\mathrm{ev}_\omega:C(\Omega)\to \mathbb{F}$ (for 
$\mathbb{F} = \R$ or $\C$), which 
gives a family of representations $\{\pi_\omega\}_{\omega\in\Omega}$ 
of the crossed product $C(\Omega)\rtimes_{\alpha,\theta}\Z^d$ on 
$\ell^2(\Z^d)$ subject to the covariance condition, 
$S^n \pi_\omega(a) (S^*)^n = \pi_{\alpha^n(\omega)}(a)$. 
The representations $\pi_\omega$ were used in~\cite{GSB15,BCR15} 
to define a real or complex spectral triple
\begin{equation} \label{eq:old_bulk_spec_trip}
 \lambda_b(\omega) = \bigg( \calA\hat\otimes C\ell_{0,d}, \, {}_{\pi_\omega}\ell^2(\Z^d) \otimes 
   \bigwedge\nolimits^{\!*}\R^d, \, \sum_{j=1}^d X_j \otimes \gamma^j, \, 
   \gamma_{\bigwedge^* \R^d} \bigg).
\end{equation}
The spectral triple gives a $K$-homology class for the crossed-product 
algebra, which by the covariance relation is independent under a fixed 
$\alpha$-orbit of $\Omega$.

Let us link the bulk spectral triple of Equation \eqref{eq:old_bulk_spec_trip} 
to the fundamental $K$-cycle of the bulk algebra from Equation \eqref{eq:fund-cl-bulk}.
We take the trivially graded Kasparov module
$$
 \mathrm{ev}_\omega =  \left( C(\Omega), \, {}_{\mathrm{ev}_\omega}\mathbb{F},\,0 \right)
$$
with $\mathrm{ev}_\omega$ the evaluation representation of $C(\Omega)$. 
The Kasparov module $\mathrm{ev}_\omega$ represents a class in $KKO(C(\Omega),\R)$ 
(or $KK(C(\Omega),\C)$) that can be paired with our fundamental $K$-cycle.

\begin{prop}
The spectral triple $\lambda_b(\omega)$ is unitarily equivalent to the internal product of the 
unbounded bulk Kasparov module $\lambda_b$ with $\mathrm{ev}_\omega$.
\end{prop}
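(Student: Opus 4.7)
The plan is to compute the internal Kasparov product $\lambda_b\hat\otimes_{C(\Omega)}\mathrm{ev}_\omega$ at the unbounded level and identify the result with $\lambda_b(\omega)$ via an explicit unitary. The key simplification is that $\mathrm{ev}_\omega$ has zero Dirac operator, so in the general product prescription $D_1\hat\otimes 1+1\hat\otimes_\nabla D_2$ the second summand vanishes identically for any connection, and it is natural to take the trivial one. Since $D_b=\sum_j X_j\otimes\gamma^j$ commutes with right multiplication by $C(\Omega)$ on $\ell^2(\Z^d,C(\Omega))$, the operator $D_b\hat\otimes 1$ descends to the balanced tensor product $\bigl(\ell^2(\Z^d,C(\Omega))\otimes_{C(\Omega)}\mathbb{F}\bigr)\otimes\mywedge\R^d$, yielding a candidate unbounded representative of the product.

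Next I would identify the balanced module with $\ell^2(\Z^d)$. Since $C(\Omega)$ acts on $\mathbb{F}$ through $\mathrm{ev}_\omega$, one has $(\delta_m\otimes b)\otimes_{C(\Omega)} 1=b(\omega)\bigl((\delta_m\otimes 1)\otimes_{C(\Omega)} 1\bigr)$, so the assignment $U_0\colon(\delta_m\otimes b)\otimes_{C(\Omega)} 1\mapsto b(\omega)\,\delta_m$ extends to a Hilbert space unitary, and $U:=U_0\otimes\mathrm{id}_{\mywedge\R^d}$ carries $D_b\hat\otimes 1$ onto $\sum_j X_j\otimes\gamma^j$ on $\ell^2(\Z^d)\otimes\mywedge\R^d$ with the grading and left $C\ell_{0,d}$-action untouched. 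The main bookkeeping is then to verify that $U$ intertwines the left action of $\calA_b$ on the balanced product (left multiplication on $\ell^2(\Z^d,C(\Omega))$ followed by evaluation) with the covariant representation $\pi_\omega$. Starting from the formula $S^n b\cdot(\delta_m\otimes b')=\delta_{n+m}\otimes\alpha^{-m-n}(\theta(m,n))\alpha^{-m}(b)b'$ from Proposition \ref{prop:left_action_adjointable} and applying $\mathrm{ev}_\omega$, the cocycle factors and the shift convention for $\alpha$ line up (possibly after re-indexing the basepoint along the $\alpha$-orbit) with the defining formulas $\pi_\omega(f)\delta_n=f(\alpha^n\omega)\delta_n$ and $\pi_\omega(S_i)\delta_n=\theta(n,e_i)(\omega)\delta_{n+e_i}$.

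Finally, to certify that this candidate truly represents the Kasparov product, I would verify Kucerovsky's criteria (Theorem \ref{thm:Kucerovsky_criterion}) with $D=D_b\hat\otimes 1$ and $D_2=0$. The connection condition reduces to boundedness of commutators of $D_b\hat\otimes 1$ with creation operators $T_{e_1}$, which holds because we have used the trivial connection; the domain condition $\Dom(D)\subset\Dom(D_b\hat\otimes 1)$ is tautological; and the positivity condition reads $2\langle(D_b\hat\otimes 1)e,(D_b\hat\otimes 1)e\rangle\geq K\langle e,e\rangle$, satisfied with $K=0$. The one real obstacle in the whole argument is thus the careful convention-tracking needed to identify the left action with $\pi_\omega$; the Kasparov-theoretic content is essentially forced by the vanishing of the second Dirac, so once the intertwining is checked, the unitary equivalence of spectral triples follows immediately.
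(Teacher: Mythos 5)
Your proposal is correct and follows essentially the same route as the paper: compute the product directly (the zero operator of $\mathrm{ev}_\omega$ makes the connection term vanish), identify the balanced tensor product $\ell^2(\Z^d,C(\Omega))\otimes_{\mathrm{ev}_\omega}\mathbb{F}$ with $\ell^2(\Z^d)$, and check that the induced left action is the covariant representation $\pi_\omega$ while $X_j\otimes 1\mapsto X_j$. The explicit Kucerovsky verification you add is left implicit in the paper but is consistent with it and harmless.
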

\begin{proof}
Because the evaluation Kasparov module is very simple, we can easily compute 
the internal product
\begin{align*}
   &\bigg(\calA \hat\otimes C\ell_{0,d},\, \ell^2(\Z^{d},C(\Omega)) \otimes 
     \bigwedge\nolimits^{\!\ast}\R^{d},\, \sum_{j=1}^d X_j\otimes \gamma^j, \,
     \gamma_{\bigwedge^* \R^d}  \bigg)  \hat\otimes_{C(\Omega)}
     \left( C(\Omega), \, {}_{\mathrm{ev}_\omega}\mathbb{F},\,0 \right) \\
  &\hspace{1cm}\cong \bigg( \calA \hat\otimes C\ell_{0,d},\, 
      \calH  \otimes 
      \bigwedge\nolimits^{\!\ast}\R^{d},\, \sum_{j=1}^d X_j\otimes1\otimes \gamma^j, \,
     \gamma_{\bigwedge^* \R^d}  \bigg)
\end{align*}
with $\calH = \ell^2(\Z^d,C(\Omega)) \otimes_{\mathrm{ev}_\omega} \mathbb{F}$.
We identify $\ell^2(\Z^d,C(\Omega)) \otimes_{\mathrm{ev}_\omega} \mathbb{F}$ 
with $\ell^2(\Z^d)$ 
under which $X_j\otimes 1\mapsto X_j$ and the left-action of $\calA$ takes the form
$$  
  S^m g\cdot \delta_n = \theta(n,m) g(\alpha^n(\omega))\delta_{n+m} = \pi_\omega(S^m g)\delta_n.
$$
Hence we recover $\lambda_b(\omega)$.
\end{proof}

Similarily we obtain a spectral triple for the edge algebra
$$
 \lambda_e(\omega) \!=\! \bigg( \calA_e \hat\otimes C\ell_{0,d-1}, {}_{\pi_\omega}\ell^2(\Z^{d-1}) 
   \otimes \bigwedge\nolimits^{\!*}\R^{d-1},  \sum_{j=1}^{d-1} X_j \otimes \gamma^j,  
   \gamma_{\bigwedge^*\R^{d-1}} \bigg)
$$  
with $[\lambda_e(\omega)] = [\lambda_e]\hat\otimes_{C(\Omega)} [\mathrm{ev}_\omega]$.
Therefore by Theorem \ref{thm:KKO_factorisation} we obtain a 
factorisation of the bulk spectral triple 
\begin{equation*}
 [\ext_{be}] \hat\otimes [\lambda_e(\omega)] = (-1)^{d-1} [\lambda_b(\omega)].
\end{equation*}

%%%%%%%%%%%%%%%%%%%%%%%%%%%%%%%%%%
%%%%%%%%%%%%%%%%%%%%%%%%%%%%%%%%%%%%%%%%%%%%%%%%%%%%%%%%%%%%%%%%%%%%%%
%%%%%%%%%%%%%%%%%%%%%%%%%%%%%%%%%%%%%%%%%%%%%%%%%%%%%%%%%%%%%%%%%%%%%%

\section{$K$-theory, pairings and computational challenges}
\label{sec:K_theory_and_pairings}
In this section we outline how our rather general results about Kasparov 
modules and twisted $\Z^d$-actions can be related to topological phases 
and the bulk-edge correspondence. 

As mentioned in the introduction, the 
measured quantities in insulator models involve the pairing of a $K$-theory 
class with a dual theory, be it cyclic cohomology, $K$-homology or the 
more general Kasparov product. 

Computing these pairings gives rise to 
analytic index formulas, which are in general a Clifford module valued 
index in the sense of Atiyah--Bott--Shapiro~\cite{ABS64}. In the case of 
non-torsion invariants, cyclic formulas may be used to obtain more computable 
expressions for disorder-averaged quantities. The case of torsion indices is 
more complicated and we finish with some brief remarks about computing 
such invariants.

\subsection{Symmetries and $K$-theory}
Recall that our Hamiltonian of interest is a self-adjoint element $h$ in the 
crossed-product algebra $C(\Omega)\rtimes_{\alpha,\theta}\Z^d$. 
A Hamiltonian $h$ with spectral gap at $0$ represents an 
extended topological phase if $h$ is compatible with 
certain symmetry involutions. The symmetries of interest to us 
are time-reversal symmetry (TRS), particle-hole (charge-conjugation) symmetry (PHS) and
chiral (sublattice) symmetry, although we emphasise that 
other symmetries such as spatial involution may be considered. The following result,
due in various forms to numerous authors,
associates a $K$-theory class to a symmetry compatible Hamiltonian.

\begin{prop}[\cite{BCR15, FM13, Kellendonk15, Kubota15b, Thiang14}] 
\label{prop:symmetries_give_K-theory}
Let $h\in C(\Omega)\rtimes_{\alpha,\theta}\Z$ be self-adjoint with 
a spectral gap at $0$. If $h$ is compatible with time-reversal symmetry and/or 
particle-hole symmetry and/or chiral symmetry then we may associate a class 
in $K_j(C(\Omega)\rtimes\Z^d)$ or $KO_j(C(\Omega)\rtimes\Z^d)$, where 
$j$ is determined by the symmetries present. The details are summarised 
in Table \ref{table:K_theory_classification}.
\end{prop}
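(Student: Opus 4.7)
The plan is to associate the claimed class in three structured steps: (a) produce a spectral object living in the algebra; (b) encode the symmetry data as a Clifford action together with a real structure; (c) invoke van Daele $K$-theory (equivalently the Karoubi triple/Atiyah--Bott--Shapiro picture) to extract the $KO_j$-class. The actual analytic input is minimal, since everything is driven by the spectral gap hypothesis and continuous functional calculus.

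For step (a), I would apply continuous functional calculus to $h$ inside the unital $C^*$-algebra $C(\Omega)\rtimes_{\alpha,\theta}\Z^d$ (or a matrix algebra over it, to accommodate internal degrees of freedom). The gap at $0$ makes $\mathrm{sgn}$ continuous on the spectrum of $h$, so the Fermi projection $p=\chi_{(-\infty,0]}(h)$ and the flat-band Hamiltonian $\hat h = 1-2p$ genuinely lie in the algebra, with $\hat h$ a self-adjoint unitary. Homotopy invariance through the spectral gap deformation will give homotopy invariance of the resulting class within the symmetry type.

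For step (b), I would translate each symmetry into algebraic data. TRS is implemented by an anti-linear $T$ with $T^2=\pm 1$ and $ThT^{-1}=h$; PHS by an anti-linear $C$ with $C^2=\pm 1$ and $ChC^{-1}=-h$; chiral symmetry by a linear self-adjoint unitary $\Gamma$ with $\Gamma h\Gamma=-h$. Conjugation by $T$ (or, when both are present, by a suitable combination of $T$ and $C$) defines a real structure on the ambient algebra, so that $\hat h$ belongs to a real subalgebra. The constraints then become: $\hat h$ is odd with respect to $\Gamma$ and $C$ and even with respect to $T$. The generators $T$, $C$, $\Gamma$ together with these gradings assemble into a Clifford algebra $C\ell_{p,q}$ whose signature is read off from the signs of $T^2$ and $C^2$ and the presence or absence of $\Gamma$.

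For step (c), the pair consisting of $\hat h$ and this Clifford action is exactly the input for van Daele's $K$-theory; under the identification detailed in \cite{BCR15} (see also \cite{Kellendonk15,Thiang14,Kubota15b,FM13}) the resulting group is $KO_j(C(\Omega)\rtimes_{\alpha,\theta}\Z^d)$, with $j$ determined by the Clifford degree associated to the chosen symmetry signature. Running through the ten Altland--Zirnbauer combinations reproduces the indexing in Table~\ref{table:K_theory_classification}; in the complex, symmetry-free or chiral-only cases, the output is simply $K_0$ (from $p$) or $K_1$ (from the off-diagonal unitary block of $\hat h$). The main obstacle, and the reason the proposition is cited rather than reproved, is not the analysis but the bookkeeping: matching every $(T^2,C^2,\Gamma)$ pattern to the correct $C\ell_{p,q}$ signature and then to the correct $KO_j$ via the van Daele isomorphism. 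All of this is carried out in the references, and the proof here is best presented as an appeal to that dictionary after noting that functional calculus plus the symmetry constraints place $\hat h$ in the required Clifford-equivariant real sub-structure.
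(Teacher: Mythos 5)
Your proposal is correct and follows essentially the same route as the paper, which likewise does not reprove the proposition but sketches the two (equivalent) constructions — the gapped phase $h|h|^{-1}$ obtained by functional calculus serving as a grading for the anti-linear symmetry data, interpreted either via projective unitary/anti-unitary representations of a subgroup of the $CT$-group or via van Daele $K$-theory with Clifford/real structure — and defers the signature-to-$KO_j$ bookkeeping to the cited references. Nothing further is needed.
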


The specific class associated to a symmetry compatible Hamiltonian 
in Proposition \ref{prop:symmetries_give_K-theory} may arise in 
several (largely equivalent) ways.

One may consider a subgroup $G$ of the $CT$-symmetry 
group $\{1,T,C,CT\}\cong \Z_2\times \Z_2$ with $C$ denoting charge-conjugation (particle-hole), 
$T$ time-reversal and $CT$ sublattice/chiral symmetry. A topological 
phase can then be considered as a Hamiltonian with spectral gap (assumed to 
be at $0$), whose phase $h|h|^{-1}$ acts as a grading for a projective 
unitary/anti-unitary (PUA) representation of $G\subset\{1,T,C,CT\}$ on 
a complex Hilbert space $\calH$. This is the 
perspective first developed in~\cite{FM13} in the commutative setting and then 
extended to noncommutative algebras in~\cite{BCR15,Thiang14}.
The even/odd nature of the time-reversal and charge-conjugation symmetries 
is encoded in the cocycle that arises in the projective representation. 
The class in $K_j(A)$ or $KO_j(A)$ for $A\subset C(\Omega)\rtimes_{\alpha,\theta}\Z^d$ 
from Proposition \ref{prop:symmetries_give_K-theory} and Table \ref{table:K_theory_classification}
is determined by the subgroup $G$ of the $CT$-symmetry group 
and its PUA representation.

\begin{table} 
    \centering
    \begin{tabular}{ p{4.2cm} |  p{3.0cm} }
       {Symmetry type} & {$K$-theory group} \\ \hline
         even TRS & $KO_0(A)$  \\
         even TRS, even PHS & $KO_1(A)$ \\
         even PHS & $KO_2(A)$ \\
         even PHS, odd TRS & $KO_3(A)$ \\
         odd TRS & $KO_4(A)$ \\
         odd TRS, odd PHS & $KO_5(A)$ \\
         odd PHS & $KO_6(A)$ \\
         even TRS, odd PHS & $KO_7(A)$ \\   \hline \hline 
         N/A &  $K_0(A)$ \\
         chiral & $K_1(A)$
    \end{tabular}
    \caption{Symmetry types and the corresponding 
    $K$-theory group of the ungraded algebra 
    $A=C(\Omega)\rtimes_{\alpha,\theta}\Z^d$ (real or complex). 
    See Proposition \ref{prop:symmetries_give_K-theory}.
    \label{table:K_theory_classification}}
\end{table}

One may also consider the gapped Hamiltonian $h$ to be oddly graded 
by the presence of a chiral/sublattice symmetry, in which case the 
phase $h|h|^{-1}$ determines a class in the 
van Daele $K$-theory of the algebra, which can then 
be related to real or complex $K$-theory.\footnote{
If there is no chiral/sublattice symmetry then for $h$ to have odd grading 
we must first consider a larger graded algebra which we can then reduce 
to the trivially graded $A_b$. 
See~\cite{Kellendonk15}.}
Other symmetries can be incorporated 
by considering Real structures on the observable algebra. By considering 
the various types of symmetries and whether they are even and odd, one can 
derive Table \ref{table:K_theory_classification}. 
See~\cite{Kellendonk15,Kubota15a} for further information.

The generality and flexibility of $KK$-theory means that we can use either 
the symmetry class or the van Daele class in 
terms of pairing or the bulk-edge correspondence. The choice of $K$-theory 
class can therefore be determined in order to model the specifics of an
experimental set-up. This is important as the techniques being employed
to define/measure $\Z_2$-invariants of topological phases are still 
in development.

\subsection{Pairings, the Clifford index and the bulk-edge correspondence}
\label{sec:pairings_and_bulkedge}
As briefly explained, a symmetry compatible gapped Hamiltonian gives 
rise to a class in $K_j(A_b)$ or $KO_j(A_b)$ for 
$A_b$ the real or complex ungraded 
crossed product $C(\Omega)\rtimes_{\alpha,\theta}\Z^d$. 
By Theorem \ref{prop:real_Real_k_with_kasparov_equivalence}, 
we can relate the $K$-theory groups to $KK(\C\ell_j,A_b)$ or 
$KKO(C\ell_{j,0},A_b)$. Hence we can 
consider the map
\begin{align} \label{eq:KKO_index_pairing}
 &KKO(C\ell_{j,0},C(\Omega)\rtimes_{\alpha,\theta}\Z^d) \times 
   KKO(C(\Omega)\rtimes_{\alpha,\theta}\Z^d \hat\otimes C\ell_{0,d}, C(\Omega))  
   \to KKO(C\ell_{j,d},C(\Omega))
\end{align}
given by the internal Kasparov product, where the class in the group
$KKO^d(C(\Omega)\rtimes_{\alpha,\theta}\Z^d, C(\Omega))$ is represented by 
the fundamental $K$-cycle for the $\Z^d$-action, $\lambda_b$ from 
Equation \eqref{eq:fund-cl-bulk} (an analogous map 
occurs in the complex case).

Like the case of the quantum Hall effect, where the Hall conductance 
is related to a pairing of the Fermi projection with a cyclic cocycle or 
$K$-homology class, we 
claim that the quantities of interest in topological insulator systems arise as 
pairings/products of this type.

Without specifying a particular $K$-theory class, we can 
make some general comments about the pairing with the fundamental $K$-cycle. 
The unbounded product with the fundamental $K$-cycle has the general form
$$
 \left(C\ell_{j,d}, \, E_{C(\Omega)}, \, 
   \wt{X}, \, \Gamma \right),
$$
with $E_{C(\Omega)}$ a countably generated $C^*$-module with a left-action 
of $C\ell_{j,d}$. The construction of the product is done in such a way 
that the left-action of $C\ell_{j,d}$ graded-commutes with the 
unbounded operator $\wt{X}$. This implies that the topological information 
of interest is contained in the kernel $\Ker(\wt{X})$ as a $C^*$-submodule 
of $E_{C(\Omega)}$ (when this makes sense, see~\cite[Appendix B]{BCR15}).

Let us now associate an analytic index to the product in Equation 
\eqref{eq:KKO_index_pairing}.
\begin{defn}
We let ${}_{r,s}{\mathfrak{M}}_{C(\Omega)}$ be the Grothendieck group of equivalence 
classes of real $\Z_2$-graded right-$C(\Omega)$ $C^*$-modules carrying a 
graded left-representation of $C\ell_{r,s}$.
\end{defn}

Using the notation of Clifford modules, $\Ker(\wt{X})$ determines a class 
in the quotient group ${}_{j,d}{\mathfrak{M}}_{C(\Omega)}/i^*({}_{j+1,d}{\mathfrak{M}}_{C(\Omega)})$, 
where $i^*$ comes from restricting a Clifford action of 
$C\ell_{j+1,d}$ to $C\ell_{j,d}$. Next, we use an 
elementary extension of the 
Atiyah--Bott--Shapiro isomorphism, \cite{ABS64} and \cite[{\S}2.3]{SchroderKTheory}, 
to make the identification
$$  
{}_{j,d}{\mathfrak{M}}_{C(\Omega)}/i^*{}_{j+1,d}{\mathfrak{M}}_{C(\Omega)} 
  \cong  KO_{j-d}(C(\Omega)). 
$$

\begin{defn}
The Clifford index, $\Index_{j-d}(\wt{X})$, of $\wt{X}$ is given by 
the class
$$ [\Ker(\wt{X})] \in 
{}_{j,d}{\mathfrak{M}}_{C(\Omega)}/i^*{}_{j+1,d}{\mathfrak{M}}_{C(\Omega)} 
  \cong  KO_{j-d}(C(\Omega)) $$
\end{defn}

We remark that $\Index_k$ reproduces the usual (real) $C^*$-module Fredholm index 
as studied in~\cite[Chapter 4]{Polaris} if $k=0$, see~\cite{SchroderKTheory}.

Of course we may instead wish to use an invariant probability measure $\bP$ 
on $\Omega$ to obtain invariants averaged over the disorder space rather than 
$K$-theory classes of the disorder. We will return to this question 
in Section \ref{subsec:semifinite_pairings}.

\subsubsection{The bulk-edge correspondence}
Let us now apply Theorem \ref{thm:KKO_factorisation} to our study on 
pairings. Denoting by $[x_b]$ the $K$-theory class represented by 
the symmetry compatible Hamiltonian, we have that 
$$
  [x_b] \hat\otimes_{A_b} [\lambda_b] = 
   (-1)^{d-1} [x_b] \hat\otimes_{A_b} [\mathrm{ext}] 
     \hat\otimes_{A_e} [\lambda_e]
$$
with $\lambda_e$ the fundamental $K$-cycle for the edge algebra.
Using associativity of the Kasparov product to group the terms on the right in two different ways, 
the real index pairing will either be a pairing
\begin{align*}
  &KKO(C\ell_{j,0}, C(\Omega)\rtimes_{\alpha,\theta}\Z^d) \times 
    KKO(C(\Omega)\rtimes_{\alpha,\theta}\Z^d \hat\otimes C\ell_{0,d},\R) 
     \to KO_{j-d}(C(\Omega)), 
\end{align*}
the bulk invariant, or a new (but equivalent) pairing
\begin{align*}  
 &KKO(C\ell_{j,0}\hat\otimes C\ell_{0,1}, C(\Omega)\rtimes_{\alpha^\|,\theta}\Z^{d-1}) \times 
   KKO(C(\Omega)\rtimes_{\alpha^\|,\theta}\Z^{d-1} \hat\otimes C\ell_{0,d-1},\R) \\
     &\hspace{12cm} \to  KO_{j-d}(C(\Omega)). 
\end{align*}
The second pairing yields an invariant that comes from  
$A_e =C(\Omega)\rtimes_{\alpha^\|,\theta}\Z^{d-1}$, the edge algebra of a system with 
boundary. Theorem \ref{thm:KKO_factorisation} 
ensures that regardless of our choice of pairing, the result is the 
same and so we obtain the bulk-edge correspondence for real and complex algebras. 
In particular non-trivial bulk invariants imply non-trivial edge invariants and vice versa.
Furthermore, we see that the bulk-edge correspondence continues to hold under 
the addition of weak disorder.
In complex examples, the 
value of the edge pairing can be interpreted as a response coefficient of the edge system like, for instance, 
the conductance of a current concentrated at the boundary of the sample~\cite{KR06,SBKR02,KSB04b, PSBbook}. 

\begin{remark}[Wider applications of Theorem \ref{thm:KKO_factorisation}] \label{remark:Wider_applications_of_bulkedge}
The bulk-edge correspondence and Theorem \ref{thm:KKO_factorisation} are 
largely independent of the symmetry considerations of topological phases. 
Instead, it is a general property of the (real or complex) unbounded Kasparov module 
representing the short exact sequence
$$  0 \to \calK \otimes C(\Omega)\rtimes_{\alpha^\|,\theta} \Z^{d-1} \to \calT(\alpha_d) \to C(\Omega)\rtimes_{\alpha,\theta} \Z^d \to 0  $$
and the fundamental $K$-cycles on the ideal and quotient algebras we have constructed.

In particular, the fact that the factorisation occurs on the $K$-homological part of the 
index pairing means other $K$-theory classes and symmetry types can be considered without 
changing the result and is independent of the symmetries present. 
For example, if we were to consider symmetry compatible Hamiltonians 
of a group $\tilde{G}$ that included spatial involution or other symmetries, then 
provided that the symmetry data can be associated to a class in 
$KKO(C^*(\tilde{G}),A_b)$ (or complex), 
the pairing with $\lambda_b$ would still display the bulk-edge correspondence.

{\em Separating the topological information arising from the 
internal symmetries of the Hamiltonian from the 
(non-commutative) geometry of the 
Brillouin zone highlights an advantage of using Kasparov theory to study topological 
systems with internal symmetries. We obtain the flexibility to change the 
$K$-theoretic data without affecting the geometric information that is used to obtain the 
topological invariants of interest and vice versa. }
\end{remark}

\subsection{Semifinite spectral triples and non-torsion invariants} \label{subsec:semifinite_pairings}
Given a unital $C^*$-algebra (real or complex) $B$ with possibly twisted $\Z^d$-action, 
we have given a general procedure to obtain an unbounded $B\rtimes_{\alpha,\theta}\Z^d$-$B$ module 
and class in $KKO(B\rtimes_{\alpha,\theta}\Z^d\hat\otimes C\ell_{0,d}, B)$.
If the algebra $B$ has a faithful, semifinite and norm-lower semicontinous trace 
$\tau_B$, there is a general method by which we can obtain a semifinite spectral triple as studied 
in~\cite{KNR,LN04,PR06}. Such a condition on $B$ is satisfied in the physically 
interesting case of $B=C(\Omega,M_N)$, where the disorder space of configurations 
is equipped with a probability measure $\bP$ such that $\mathrm{supp}(\bP)=\Omega$. 
The measure is usually assumed to also be invariant and ergodic under the $\Z^d$-action.

Given the $C^*$-module $\ell^2(\Z^d,B)$ and trace $\tau_B$, we consider the 
inner-product
$$
  \langle \lambda_1\otimes b_1,\lambda_2 \otimes b_2 \rangle = 
    \tau_B\!\left( (\lambda_1\otimes b_1 \mid \lambda_2 \otimes b_2)_B \right) 
    = \langle \lambda_1, \lambda_2 \rangle_{\ell^2(\Z^d)} \, \tau_B(b_1^*b_2),
$$
which defines the Hilbert space $\ell^2(\Z^d)\otimes L^2(B, \tau_B)$ where 
$L^2(B, \tau_B)$ is the GNS space.

\begin{lemma}
The algebra $A=B\rtimes_{\alpha,\theta} \Z^d$ acts on $\ell^2(\Z^d)\otimes L^2(B, \tau_B)$.
\end{lemma}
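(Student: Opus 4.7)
The plan is to upgrade the adjointable action of $A$ on the $C^*$-module $\ell^2(\Z^d, B)$ established in Proposition \ref{prop:left_action_adjointable} to a bounded action on the Hilbert space completion obtained via $\tau_B$. The key is the standard principle that adjointable operators on a $C^*$-module over $B$ pass through GNS to bounded operators on $L^2$, provided we can control norms by the $C^*$-norm of $A$.

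First I would observe that the algebraic tensor product $\ell^2(\Z^d) \otimes B$ sits densely inside both $\ell^2(\Z^d, B)$ (as a $C^*$-module) and $\ell^2(\Z^d) \otimes L^2(B, \tau_B)$ (as a Hilbert space), and that the Hilbert space inner product is exactly $\tau_B$ applied to the $B$-valued inner product. Thus for $\xi \in \ell^2(\Z^d)\otimes B$ and $a \in A$, the action inherited from Proposition \ref{prop:left_action_adjointable} is already defined, and one only needs to estimate $\|a\cdot \xi\|_{\mathrm{Hilb}}$.

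The main estimate is
\begin{equation*}
\|a\cdot\xi\|_{\mathrm{Hilb}}^2 = \tau_B\bigl((a\xi \mid a\xi)_B\bigr) = \tau_B\bigl((\xi \mid a^*a\,\xi)_B\bigr) \le \|a^*a\|_A\,\tau_B\bigl((\xi\mid\xi)_B\bigr) = \|a\|_A^2\,\|\xi\|_{\mathrm{Hilb}}^2 ,
\end{equation*}
where the inequality uses positivity of $\tau_B$ together with the $C^*$-module inequality $(\xi \mid a^*a\,\xi)_B \le \|a^*a\|_A (\xi\mid\xi)_B$ in $B$. This shows the $A$-action is bounded (by the $C^*$-norm) on the dense subspace, so it extends continuously to all of $\ell^2(\Z^d) \otimes L^2(B, \tau_B)$. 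The $\ast$-homomorphism property, already valid on the $C^*$-module, is preserved under this extension by density.

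The only mildly subtle point is verifying the $C^*$-module operator inequality $(\xi \mid a^*a\,\xi)_B \le \|a^*a\|_A (\xi\mid\xi)_B$; this is standard once one knows that $\|a^*a\|_A \cdot 1 - a^*a$ is positive in $\End_B(\ell^2(\Z^d,B))$, which follows directly from adjointability and the functional calculus, so there is no real obstacle. The faithfulness and norm-lower semicontinuity of $\tau_B$ play no role for mere boundedness, though they are what makes the resulting Hilbert space meaningfully capture the module.
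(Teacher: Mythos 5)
Your proof is correct and is essentially the paper's argument: the paper simply identifies $\ell^2(\Z^d)\otimes L^2(B,\tau_B)$ with the interior tensor product $\ell^2(\Z^d,B)\otimes_B L^2(B,\tau_B)$ and invokes the standard fact that adjointable operators on the $C^*$-module act on this localisation, and your trace estimate $\tau_B\bigl((\xi\mid a^*a\,\xi)_B\bigr)\le \|a^*a\|_A\,\tau_B\bigl((\xi\mid\xi)_B\bigr)$ is precisely the proof of that fact made explicit. (Only cosmetic caveat: for a merely semifinite $\tau_B$ the dense subspace should be $\ell^2(\Z^d)\otimes(B\cap L^2(B,\tau_B))$ rather than $\ell^2(\Z^d)\otimes B$, which changes nothing in the estimate.)
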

\begin{proof}
This follows from the identification of $\ell^2(\Z)\otimes L^2(B,\tau_B)$ 
with $\ell^2(\Z,B)\otimes_B L^2(B,\tau_B)$.
\end{proof}

\begin{prop}[\cite{LN04}, Theorem 1.1]
Given $T\in\End_{B}(\ell^2(\Z^d,B))$ with $T\geq 0$, define
$$  \Tr_\tau(T) = \sup_{I} \sum_{\xi\in I}\tau_B\!\left[( \xi\mid T\xi)_{B}\right], $$
where the supremum is taken over all finite subsets $I$ of $\ell^2(\Z,B)$ such that 
$\sum_{\xi\in I}\Theta_{\xi,\xi}\leq 1$. Then $\Tr_\tau$ is a semifinite norm-lower semicontinuous 
trace with the property $\Tr_{\tau}(\Theta_{\xi_1,\xi_2}) = \tau_B[(\xi_2\mid \xi_1)_{B}]$.
\end{prop}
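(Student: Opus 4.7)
The plan is to verify each claim directly from the definition, following the strategy of Laca and Neshveyev. The three properties to check are (i) the trace property on positive operators, (ii) lower semicontinuity in the norm topology, (iii) semifiniteness, and then (iv) the explicit formula on rank-one endomorphisms, which in turn will supply the finite-trace elements needed for (iii).

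First I would establish the formula $\Tr_\tau(\Theta_{\xi,\xi}) = \tau_B[(\xi\mid\xi)_B]$ for $\xi \in \ell^2(\Z^d,B)$. The upper bound uses a Bessel-type estimate: for any admissible finite family $I$ with $\sum_{\eta \in I} \Theta_{\eta,\eta} \leq 1$, one computes
\[
\sum_{\eta \in I}\tau_B\!\left[(\eta\mid \Theta_{\xi,\xi}\eta)_B\right]
= \sum_{\eta\in I} \tau_B\!\left[(\xi\mid\eta)_B(\eta\mid\xi)_B\right]
\le \tau_B[(\xi\mid\xi)_B],
\]
where the inequality uses the trace property of $\tau_B$ together with $\sum_{\eta}\Theta_{\eta,\eta}\leq 1$ evaluated in the inner product against $\xi$. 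The matching lower bound comes from the one-element admissible family $I = \{\xi \cdot c_\varepsilon\}$ with $c_\varepsilon = ((\xi\mid\xi)_B + \varepsilon)^{-1/2}$, which satisfies $\Theta_{\xi c_\varepsilon,\xi c_\varepsilon}\leq 1$ and whose contribution limits to $\tau_B[(\xi\mid\xi)_B]$ as $\varepsilon \to 0$ by norm-lower semicontinuity of $\tau_B$. Polarisation then yields $\Tr_\tau(\Theta_{\xi_1,\xi_2}) = \tau_B[(\xi_2\mid\xi_1)_B]$.

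Next I would verify the trace property. Given any adjointable $S$ and any admissible family $I$ for $SS^*$, the reindexed family $\{S^*\xi : \xi \in I\}$ is admissible for the evaluation of $S^*S$ because $\sum_\xi \Theta_{S^*\xi,S^*\xi} = S^*(\sum_\xi \Theta_{\xi,\xi})S \leq S^*S$, and a routine rescaling then shows $\Tr_\tau(SS^*) = \Tr_\tau(S^*S)$; additivity on positive operators is immediate from the definition. Lower semicontinuity is automatic since $\Tr_\tau$ is the pointwise supremum, over admissible $I$, of the functionals $T \mapsto \sum_{\xi\in I}\tau_B[(\xi\mid T\xi)_B]$, each of which is norm-continuous on the positive cone (it is a finite sum of compositions of $\tau_B$, which is norm-lower semicontinuous, with the norm-continuous maps $T \mapsto (\xi\mid T\xi)_B$).

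For semifiniteness I would use the formula from Step 1: since $\tau_B$ is semifinite, there is a dense collection of $b \in B_+$ with $\tau_B(b) < \infty$, and for such $b$ the rank-one operators $\Theta_{\delta_n\otimes b^{1/2},\,\delta_n\otimes b^{1/2}}$ have $\Tr_\tau$-value $\tau_B(b)<\infty$; finite sums of these lie in $\End_B^0(\ell^2(\Z^d,B))$ and are weakly dense. The main technical point is the lower-bound half of Step 1, which has to be handled carefully because admissible families must satisfy the operator inequality $\sum \Theta_{\eta,\eta}\leq 1$ exactly, not approximately; once this is settled the remaining items are formal. The overall structure mirrors \cite[Theorem~1.1]{LN04}, and I would invoke that reference to compress the argument.
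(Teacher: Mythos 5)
The paper offers no proof of this proposition at all: it is imported verbatim from Laca--Neshveyev, so your final move of invoking \cite[Theorem 1.1]{LN04} is exactly what the authors do. As a reconstruction of what lies behind that citation, your Step 1 is sound on both sides (the Bessel-type bound using $\sum_\eta\Theta_{\eta,\eta}\le1$ and positivity of $\tau_B$, and the lower bound via the admissible singleton $\{\xi c_\varepsilon\}$ with $c_\varepsilon=((\xi\mid\xi)_B+\varepsilon)^{-1/2}$), lower semicontinuity as a supremum is fine (though each functional $T\mapsto\sum_{\xi\in I}\tau_B[(\xi\mid T\xi)_B]$ is only lower semicontinuous, not norm-continuous, when $\tau_B$ is unbounded), and the semifiniteness observation is the right one.

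If, however, the sketch is meant to stand on its own, two steps are genuinely glossed. First, additivity of $\Tr_\tau$ on positives is not ``immediate from the definition'': subadditivity is, but superadditivity requires merging near-optimal admissible families for $T_1$ and $T_2$ into a single family satisfying $\sum_\xi\Theta_{\xi,\xi}\le1$, and the naive union fails this; it is part of what \cite{LN04} actually prove. Second, your trace-property argument computes the wrong quantity: for an admissible $I$, the rescaled reindexed family $J=\{S^*\xi/\|S\|\}$ contributes $\|S\|^{-2}\sum_{\xi\in I}\tau_B[(SS^*\xi\mid SS^*\xi)_B]=\|S\|^{-2}\sum_{\xi\in I}\tau_B[(\xi\mid (SS^*)^2\xi)_B]$ to the supremum defining $\Tr_\tau(S^*S)$, whereas what must be compared with $\Tr_\tau(S^*S)$ is $\sum_{\xi\in I}\tau_B[(\xi\mid SS^*\xi)_B]=\sum_{\xi\in I}\tau_B[(S^*\xi\mid S^*\xi)_B]$; no rescaling repairs this mismatch between $SS^*$ and $(SS^*)^2$. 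A clean elementary fix, well adapted to this module, is to first prove the frame identity $\Tr_\tau(T)=\sum_m\tau_B[(e_m\mid Te_m)_B]$ for the frame $e_m=\delta_m\otimes1_B$ (the $\ge$ direction from admissibility of finite subfamilies, the $\le$ direction by expanding $(T^{1/2}\xi\mid T^{1/2}\xi)_B$ over the frame and using $\sum_\xi\Theta_{\xi,\xi}\le1$); additivity is then immediate, and $\Tr_\tau(S^*S)=\Tr_\tau(SS^*)$ follows by expanding both over the frame and interchanging the positive-termed double sum using the trace property of $\tau_B$ --- or one simply leans on \cite{LN04}, as the paper does.
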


\begin{lemma}
Let $\End_B^{00}(\ell^2(\Z^d,B))$ be the algebra of the span of rank-$1$ operators,
$\Theta_{\xi_1,\xi_2}$ with $\xi_1,\xi_2\in\ell^2(\Z^d,B)$, and $\calN$ be the von Neumann algebra 
$\End_{B}^{00}(\ell^2(\Z^d,B))''$ 
with weak-closure taken in the bounded operators on 
$\ell^2(\Z^d)\otimes L^2(B,\tau_B)$. 
Then the trace $\Tr_\tau$ extends to a trace on the positive cone $\calN_+$.
\end{lemma}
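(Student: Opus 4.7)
The plan is to apply the standard procedure for extending a norm-lower semicontinuous semifinite trace on a $C^*$-algebra to a normal semifinite trace on its enveloping von Neumann algebra. The preceding proposition makes $\Tr_\tau$ such a trace on the $C^*$-closure $\End_B^0(\ell^2(\Z^d,B))$ of $\End_B^{00}$, which by construction sits weakly densely inside $\calN$ acting on $\ell^2(\Z^d)\otimes L^2(B,\tau_B)$. Hence the setup is exactly that covered by the general lifting theorems for lower semicontinuous weights (as in Pedersen's monograph or the $C^*$-module setting of \cite{LN04}).

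Concretely, I would define
$$
\Tr_\tau(T) := \sup\bigl\{\Tr_\tau(S) : S \in \End_B^0(\ell^2(\Z^d,B)),\ 0\leq S \leq T\bigr\}
$$
for $T \in \calN_+$, then check: (i) consistency with the original $\Tr_\tau$ on positive elements of $\End_B^0(\ell^2(\Z^d,B))$ by norm-lower semicontinuity; (ii) positive homogeneity, which is trivial; (iii) additivity on $\calN_+$; (iv) the trace property $\Tr_\tau(uTu^*) = \Tr_\tau(T)$ for unitary $u\in\calN$, which reduces by Kaplansky density to the rank-one identity $\Tr_\tau(\Theta_{\xi_1,\xi_2}\Theta_{\xi_2,\xi_1}) = \Tr_\tau(\Theta_{\xi_2,\xi_1}\Theta_{\xi_1,\xi_2})$, itself a consequence of the trace property of $\tau_B$ together with the formula $\Tr_\tau(\Theta_{\xi,\eta})=\tau_B[(\eta\mid\xi)_B]$; and (v) normality, which is automatic from the supremum definition since each functional $T\mapsto\sum_{\xi\in I}\tau_B[(\xi\mid T\xi)_B]$ is weak-operator lower semicontinuous on bounded positive sets.

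The main obstacle is additivity on $\calN_+$: the inequality $\Tr_\tau(T_1)+\Tr_\tau(T_2)\leq \Tr_\tau(T_1+T_2)$ is immediate, but the reverse requires that any positive $S \in \End_B^0(\ell^2(\Z^d,B))$ with $S\leq T_1+T_2$ decomposes as $S=S_1+S_2$ with $S_i\leq T_i$ approximable within $\End_B^0(\ell^2(\Z^d,B))$. The standard resolution is the polar-type decomposition $S_i := T_i^{1/2}(T_1+T_2)^{-1/2}S(T_1+T_2)^{-1/2}T_i^{1/2}$, defined via spectral truncation on the support projection of $T_1+T_2$ and pulled back into $\End_B^0(\ell^2(\Z^d,B))$ via Kaplansky density; normality then permits passage to the limit. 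Once additivity is in hand, $\Tr_\tau$ is the unique normal semifinite trace on $\calN$ extending the trace on $\End_B^0(\ell^2(\Z^d,B))$, as in the general theory of weights on von Neumann algebras.
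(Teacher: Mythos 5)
Your strategy is sound, and it is essentially an unpacking of what the paper simply cites: the paper's proof is a one-line appeal to the dual trace construction (equivalently, the explicit check in \cite[Proposition 5.11]{PR06}). In that route one observes that $\End_B^{00}(\ell^2(\Z^d,B))$, equipped with the inner product $(S,T)\mapsto \Tr_\tau(S^*T)$, is a Hilbert algebra acting on $\ell^2(\Z^d)\otimes L^2(B,\tau_B)$, and the canonical (dual) trace on its left von Neumann algebra $\calN$ is then automatically normal, semifinite and tracial, and restricts to $\Tr_\tau$; all of the delicate points you list (additivity, unitary invariance, normality) come for free from that machinery. Your supremum-over-dominated-elements extension is the more hands-on alternative, and it can be made to work, so the difference is one of economy: the paper buys the verifications wholesale, you propose to do them retail.

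Two details in your sketch need repair if you go that way. First, your decomposition $S_i := T_i^{1/2}(T_1+T_2)^{-1/2}S(T_1+T_2)^{-1/2}T_i^{1/2}$ does \emph{not} satisfy $S=S_1+S_2$; what actually decomposes $S$ is $S=\sum_i S^{1/2}W_i^*W_iS^{1/2}$ with $W_i=T_i^{1/2}(T_1+T_2)^{-1/2}$ (since $\sum_i W_i^*W_i$ is the support projection of $T_1+T_2$, which dominates that of $S$), and your $S_i=W_iSW_i^*\leq T_i$ only have the same traces as these summands \emph{after} the trace property is available — so the logical order of your steps (iii)--(v) matters, and the unitary-invariance/trace step is where the real work lies; it does not reduce to the rank-one identity alone but needs a Kaplansky-plus-lower-semicontinuity argument, i.e.\ normality established first. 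Second, the expression $\sum_{\xi\in I}\tau_B[(\xi\mid T\xi)_B]$ does not literally make sense for general $T\in\calN_+$, since $T\xi$ need not lie in the $C^*$-module; it should be read as the Hilbert-space quantity $\sum_{\xi\in I}\langle \xi, T\xi\rangle$, which agrees with it on adjointable operators and is what makes the normality claim correct. With those corrections your argument goes through, but given the amount of bookkeeping it is clear why the paper (and \cite{LN04,PR06}) simply invoke the dual trace construction.
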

\begin{proof}
This is just the dual trace construction, or an explicit check can be made as in \cite[Proposition 5.11]{PR06}.
\end{proof}

\begin{prop} \label{prop:semifinite_spec_trip}
For $\calA = B_{\alpha,\theta}\Z^d$, the tuple
$$
\bigg(\calA\hat\otimes C\ell_{0,d},\, \ell^2(\Z^d)\otimes L^2(B, \tau_B)\otimes \bigwedge\nolimits^{\!*}\R^d,\, 
   \sum_{j=1}^d X_j\otimes \gamma^j, \,(\calN, \Tr_\tau) \bigg) 
$$
is a $QC^\infty$ and $d$-summable semifinite spectral triple.
\end{prop}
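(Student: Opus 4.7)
The plan is to promote the unbounded Kasparov module of Proposition \ref{prop:general-toeplitz} to a semifinite spectral triple by tensoring with the GNS construction for $\tau_B$, and then to verify each clause of the definition directly, exploiting the fact that the Dirac operator $D=\sum_j X_j\otimes\gamma^j$ does not touch the $L^2(B,\tau_B)$ factor. Concretely, I would identify
$$
\ell^2(\Z^d)\otimes L^2(B,\tau_B)\otimes \mywedge\R^d \;\cong\; \bigl(\ell^2(\Z^d,B)\otimes\mywedge\R^d\bigr)\otimes_B L^2(B,\tau_B),
$$
so that the left action of $\calA\hat\otimes C\ell_{0,d}$ from Proposition \ref{prop:general-toeplitz} descends to a graded $*$-representation on $\calH:=\ell^2(\Z^d)\otimes L^2(B,\tau_B)\otimes\mywedge\R^d$ taking values in $\calN$.

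First I would verify the structural requirements. Self-adjointness of $D$ on $\calH$ is immediate from the fact that each $X_j$ is essentially self-adjoint on the finitely supported functions in $\ell^2(\Z^d)$ and acts trivially on the $L^2(B,\tau_B)$ factor; since $D$ leaves the $L^2(B,\tau_B)$ factor untouched, it commutes strongly with the right $B$-action realised through the GNS representation, hence is affiliated to $\calN$. The boundedness of $[D,a]$ for $a\in\calA\hat\otimes C\ell_{0,d}$ was already established in Proposition \ref{prop:general-toeplitz} at the $C^*$-module level and transfers to $\calH$, while $[D,\gamma^j]_+=0$ by construction.

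The decisive step is $\tau$-compactness and $d$-summability. Because $D^2=|X|^2\otimes 1_{L^2(B,\tau_B)}\otimes 1_{\bigwedge}$, we obtain
$$
(1+D^2)^{-s/2}=\sum_{m\in\Z^d}(1+|m|^2)^{-s/2}\,\Theta_{e_m,e_m}\otimes 1_{\bigwedge},
$$
where the sum converges in norm and each $\Theta_{e_m,e_m}$ is a $\Tr_\tau$-finite projection with $\Tr_\tau(\Theta_{e_m,e_m})=\tau_B(1_B)$. Hence $(1+D^2)^{-1/2}$ lies in $\calK_\calN$, so $\pi(a)(1+D^2)^{-1/2}\in\calK_\calN$ for every $a\in\calA\hat\otimes C\ell_{0,d}$; and summing gives
$$
\Tr_\tau\bigl((1+D^2)^{-s/2}\bigr)=2^d\,\tau_B(1_B)\sum_{m\in\Z^d}(1+|m|^2)^{-s/2},
$$
which is finite precisely for $s>d$, yielding $d$-summability.

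Finally, for the $QC^\infty$ property with $\partial=[(1+D^2)^{1/2},\cdot]$, it suffices to check on the generators $S^n$ of $\calA$ and on $[D,S^n]$, since the Clifford generators $\gamma^j,\rho^j$ commute with $(1+D^2)^{1/2}$. For $S^n$, a direct computation shows that $\partial(S^n)$ acts as multiplication on $\ell^2(\Z^d)$ by the bounded function $m\mapsto(1+|m+n|^2)^{1/2}-(1+|m|^2)^{1/2}$, composed with the translation by $n$; iterating yields higher-order finite differences of $m\mapsto(1+|m|^2)^{1/2}$ which remain uniformly bounded on $\Z^d$, so $\partial^k(S^n)\in\calN$ for all $k$. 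The main potential obstacle is a clean justification of this last bounded-ness of iterated finite differences; I would handle it by observing that $(1+|m|^2)^{1/2}$ is the restriction to $\Z^d$ of a smooth function on $\R^d$ all of whose derivatives of order $\geq 1$ are bounded, so the iterated discrete derivatives inherit this bound via the mean value theorem.
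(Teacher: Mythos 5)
Your proposal is correct, and its core coincides with the paper's argument: boundedness of $[D,a]$ is inherited from Proposition \ref{prop:general-toeplitz}, and $d$-summability (hence $\tau$-compactness of the resolvent) is obtained exactly as in the paper from the frame $\{e_m\}$, using $\Tr_\tau(\Theta_{e_m,e_m})=\tau_B(1_B)$ and the convergence of $\sum_m(1+|m|^2)^{-s/2}$ for $s>d$; your extra factor $2^d$ and the remarks on affiliation of $D$ to $\calN$ are harmless refinements the paper leaves implicit. The one place you genuinely diverge is the $QC^\infty$ check: the paper reduces it to showing that $\calA$ preserves $\Dom\big((1+|X|^2)^{k/2}\big)$ for all $k$, verified from the explicit action of $\sum_{|n|<R}S^nb_n$ (finite hopping range plus unitarity of the twist), whereas you estimate the iterated commutators $\partial^j$ directly on generators. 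Your route works, and is arguably closer to the stated definition of $QC^\infty$, but two small points deserve tightening: first, since coefficients $b\in B$ act diagonally in position they commute with $(1+|X|^2)^{1/2}$, so one should reduce from general elements $S^nb_n$ to this observation plus the computation for $S^n$; second, because the first commutator has the form $\partial(S^nb)=S^nb\,g_n(X)$ with $g_n(m)=(1+|m+n|^2)^{1/2}-(1+|m|^2)^{1/2}$, and $g_n(X)$ commutes with $(1+|X|^2)^{1/2}$, iteration gives $\partial^j(S^nb)=S^nb\,g_n(X)^j$, i.e.\ powers of the first finite difference rather than higher-order finite differences as you state. The uniform bound $|g_n(m)|\le|n|$ (your mean value theorem argument) covers this in either formulation, so the conclusion, and hence the proposition, stands.
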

\begin{proof}
The commutators $[X_j,a]$ are bounded by analogous arguments to the 
proof of Proposition \ref{prop:general-toeplitz}.
We use the frame $\{e_m\}_{m\in\Z^d}\subset \ell^2(\Z^d,B)$ with $e_m = \delta_{m}\otimes 1_B$ 
to compute that
\begin{align*}
  \Tr_\tau((1+|X|^2)^{-s/2}) &= \Tr_\tau \!\bigg(\sum_{m\in\Z^d}(1+|m|^2)^{-s/2}\Theta_{e_m,e_m} \bigg) \\
     &= \sum_{m\in\Z^d}(1+|m|^2)^{-s/2}\, \tau_B((e_m\mid e_m)_B) \\
     &=  \sum_{m\in\Z^d}(1+|m|^2)^{-s/2}\,\tau_B(1_B)
\end{align*}
by the properties of $\Tr_\tau$. The sum of $(1+|m|^2)^{-s/2}$ for $m\in\Z^d$ 
will be finite for $s>d$ and $\tau_B(1_B) = 1$. Hence $(1+|X|^2)^{-s/2}$ is 
$\Tr_\tau$-trace-class for $s>d$ as required.

Our spectral triple is $QC^\infty$ if $\calA\hat\otimes C\ell_{0,d}$ preserves the domain of 
$(1+D^2)^{k/2}$ for any $k\in\N$, which reduces to checking that 
$\calA$ preserves the domain of $(1+|X|^2)^{k/2}\otimes 1$ on $\ell^2(\Z^d)\otimes L^2(B,\tau_B)$.
We recall that elements of $\calA$ are of the form $\sum_n S^n b_n$ with $b_n\in B$ and 
$n\in\Z^d$ a multi-index such that $|n|<R$ for some $R$. We compute that
\begin{align*}
  &(1+|X|^2)^{k/2}\sum_{|n|<R}S^n b_n \cdot(\psi(x)\otimes b) \\
  &\hspace{1cm} = \sum_{|n|<R}\! (1+|x-n|^2)^{k/2}\psi(x-n) 
     \otimes \alpha^{-x+n}(\theta(x,n)) \alpha^{-x}(b_n) b
\end{align*}
for $x\in\Z^d$.
Because $n$ is strictly bounded and $\theta(x,n)$ is unitary for all $n,x\in\Z^d$, 
the Hilbert space norm of the sum over finite $n$ of
$(1+|x-n|^2)^{k/2}\psi(x-n) \otimes \alpha^{-x}(\alpha^n(\theta(x,n)) b_n) b$ 
is well-defined and norm-convergent for $\psi \otimes b$ in the domain of $(1+|X|^2)^{k/2}$.
\end{proof}

Let us return to the case of $B=C(\Omega)$, where $C(\Omega)$ is endowed with the 
trace $\calT_\bP$ from the probability measure $\bP$.
The semifinite index pairing with the symmetry $K$-theory class 
$[x_b]\in KO_d(A_b)$ (or complex) is given by the 
composition
\begin{align*}
 &KO_d(C(\Omega)\rtimes_{\alpha,\theta}\Z^d) \times KKO(C(\Omega)\rtimes_{\alpha,\theta}\Z^d \hat\otimes C\ell_{0,d}, C(\Omega)) 
   \to KO_{0}(C(\Omega)) \xrightarrow{(\calT_\bP)_\ast} \R
\end{align*}
Hence the semifinite index pairing of the 
spectral triple of Proposition \ref{prop:semifinite_spec_trip} 
measures disorder-averaged topological invariants. Furthermore 
in the case of complex algebras and modules, the semifinite local 
index formula~\cite{CPRS2,CPRS3} gives us computable expressions for the index pairing 
in terms of traces and derivations. 
We expect similar results to hold for the case of real integer invariants (e.g. 
those that arise from $KO_0(\R)$ or $KO_4(\R)$), though delay a 
more detailed investigation to another place.

\subsection{The Kane--Mele model and torsion invariants}
As an example, we consider $2$-dimensional 
systems with odd time reversal symmetry from~\cite{DNSB14b,KM05b} and 
considered in~\cite[Section 4]{BCR15}. Recall the bulk 
Hamiltonian 
$H_{KM}^\omega = \begin{pmatrix} h_\omega & g \\ g^* & \calC h_\omega \calC \end{pmatrix}$ 
acting on 
$\calH_b = \ell^2(\Z^2)\otimes \C^{2N}$,
with $h_\omega$ a Haldane Hamiltonian, $g$ the Rashba coupling and $\calC$ component-wise complex 
conjugation
such that $g^* = -\calC g\calC$. We take the symmetry group $G=\{1,T\}$ whose time-reversal 
involution is implemented by the operator 
$R_T = \begin{pmatrix} 0 & \calC \\ -\calC & 0 \end{pmatrix}$ on $\calH_b$.
We follow~\cite[Section 4]{BCR15} and construct the projective symmetry 
class $[P_\mu^G]\in KKO(C\ell_{4,0},C(\Omega)\rtimes \Z^2)$. The class
$[P_\mu^G]$ is represented by the Kasparov module
$$ 
  \left( C\ell_{4,0}, P_\mu (A\rtimes G)_A^{\oplus 2}, 0, \Gamma \right),
$$
where $P_\mu$ is the Fermi projection, $(A\rtimes G)_A$ is a $C^*$-module 
given by the completion of $A\rtimes G$ under the conditional expectation 
of the $G$-action on $A$ and $\Gamma$ is the self-adjoint unitary given by the phase 
$H_{KM}^\omega|H_{KM}^\omega|^{-1}$.

For simplicity, we will pair $[P_\mu^G]$ with the spectral triple 
coming from the evaluation map $\mathrm{ev}_\omega$. Namely we have
the real spectral triple
$$ 
 \lambda_b(\omega) =  \bigg(\calA_b \hat\otimes C\ell_{0,2},\, {}_{\pi_\omega}(\ell^2(\Z^2)\otimes\C^{2N})\otimes \bigwedge\nolimits^{\!*}\R^2,\, X_b,\, \gamma_{\bigwedge^*\R^2}\bigg) 
$$
with $X_b = \sum_{j=1}^2 X_j\otimes 1_{2N}\otimes\gamma^j$ and $\calA_b$ 
a dense subalgebra of $C(\Omega)\rtimes\Z^2$ given by matrices 
of elements in $C(\Omega)_\alpha \Z^2$. 

Let us now consider a bulk-edge system. We may link bulk and edge algebras by 
a short exact sequence, which gives rise to a class 
$[\mathrm{ext}]\in KKO(C(\Omega)\rtimes\Z^2 \hat\otimes C\ell_{0,1},C(\Omega)\rtimes\Z)$ 
by the procedure in Section \ref{subsec:extension_module}. We will omit the 
details and instead focus on the bulk and edge pairings, where the edge spectral 
triple is given by
$$
\lambda_e(\omega) = \left( \calA_e \hat\otimes C\ell_{0,1}, 
  {}_{\pi_\omega}(\ell^2(\Z)\otimes \C^{2N}) \otimes \bigwedge\nolimits^{\!*} \R, 
    X_1\otimes\gamma_\mathrm{edge},\gamma_{\bigwedge^*\R}\right)
$$
with $\calA_e$ a dense subalgebra of $C(\Omega)\rtimes\Z$.
Our bulk pairing is the product
\begin{align*}
  &\left[P_\mu^G\right] \hat\otimes_{A}\left[\left(\calA \hat\otimes C\ell_{0,2}, 
  {}_{\pi_\omega}(\ell^2(\Z^2)\otimes\C^{2N})\otimes \bigwedge\nolimits^{\!*}\R^2, X_b, 
    \gamma_{\bigwedge^*\R^2}\right)\right]  \\ 
  &KKO(C\ell_{4,0},A) \times KKO(A\hat\otimes C\ell_{2,0},\R) \to KO_2(\R) \cong \Z_2,
\end{align*}
which can be expressed concretely as the Clifford module valued index,
\begin{align*}
  \left\langle [P_\mu^G], [\lambda_b(\omega)] \right\rangle &= \left[\Ker(P_\mu X_b P_\mu)\right] \\
  &\cong \mathrm{Dim}_\C \Ker[P_\mu((X_1+iX_2)\otimes 1) P_\mu] \,\mathrm{mod}\,2
\end{align*}
using a particular choice of Clifford generators.
By Theorem \ref{thm:KKO_factorisation} and
 the associativity of the Kasparov product, this is the same as the pairing
\begin{align*}
  -&\left( [P_\mu^G] \hat\otimes_{A} [\text{ext}] \right) \hat\otimes_{A_e} 
  \left[\left( \calA_e\hat\otimes C\ell_{0,1}, 
  {}_{\pi_\omega}(\ell^2(\Z)\otimes \C^{2N}) \otimes \bigwedge\nolimits^{\!*} \R, 
    X_1\otimes\gamma_\mathrm{edge},\gamma_{\bigwedge^*\R}\right)\right],
\end{align*}
a map
\begin{align*}
  &KKO(C\ell_{4,0}\hat\otimes C\ell_{0,1}, C(\Omega)\rtimes \Z) \times 
   KKO((C(\Omega)\rtimes \Z)\hat\otimes C\ell_{0,1},\R) 
    \to KO_{4-1-1}(\R) \cong \Z_2, 
\end{align*}
which is now an invariant of the edge algebra $C(\Omega)\rtimes\Z$. Analytic 
expressions can be derived for the edge invariant by taking the Clifford index 
of the (edge) product module.

We would like to examine the edge pairing more closely. We first review what occurs in the 
complex setting as developed in~\cite{KR06,SBKR02}. Let $\Delta\subset \R$ be an 
open interval of $\R$ such that $\mu\in\Delta$ and $\Delta$ is in the complement of the 
spectrum of $H_{KM}^\omega$. By considering states in the
image of the spectral projection $P_\Delta = \chi_\Delta(\Pi_s H_{KM}^\omega \Pi_s)$ 
for $\Pi_s:\ell^2(\Z^2) \to \ell^2(\Z\otimes \{\ldots,s-1,s\})$ the projection, 
we are focusing precisely on eigenstates of the Hamiltonian with edge that do not exist in the 
bulk system, namely edge states. We use the projection $P_\Delta$ to define the unitary
$$ 
U(\Delta) = \exp\!\left(-2\pi i\frac{\Pi_s H_{KM}^\omega \Pi_s - \inf(\Delta)}{\mathrm{Vol}(\Delta)}P_\Delta\right). 
$$
It is a key result of~\cite{KR06,SBKR02} that $U(\Delta)$ is a 
unitary in $(C(\Omega)\rtimes\Z)_\C$ and, furthermore, represents the image of the Fermi projection under 
the (complex) $K$-theory boundary map.
That is, the unitary $[U(\Delta)]\in K_1((C(\Omega)\rtimes\Z)_\C)$ represents the 
complex Kasparov product $[P_\mu]\hat\otimes_{A_\C}[\text{ext}]$ 
for trivially graded algebras. The authors of \cite{SBKR02} show that the pairing of 
$[U(\Delta)]$ with the boundary spectral triple can be expressed as
\begin{align}  
\label{eq:edge_conductance_qH}
 \sigma_e &= -\frac{e^2}{h} \hat\calT\left(U(\Delta)^{*} i[X_1,U(\Delta)]\right) \nonumber \\ 
 &= - \lim_{\Delta\to\mu} \frac{1}{\mathrm{Vol}(\Delta)} \hat\calT(P_\Delta i[X_1,\Pi_s H_{KM}^\omega \Pi_s]), 
\end{align}
where $\hat\calT = \calT\otimes \Tr$ is the trace per unit volume 
along the boundary and operator trace normal to the boundary. 
One recognises Equation \eqref{eq:edge_conductance_qH} 
as measuring the conductance of an edge current (as $P_\Delta$ projects onto edge states). 
Unfortunately, in the Kane--Mele example, the expression 
$\hat\calT(P_\Delta i[X_1,\Pi_s H_{KM}^\omega \Pi_s])$ is zero as there is no net 
current and the cyclic cocycle cannot detect the $\Z_2$-index we associate to the edge channels. 
Under additional symmetries of the bulk Hamiltonian, non-zero cyclic pairings 
for the edge can be 
computed using the spin Chern number (see~\cite{SchulzBaldes13, SchulzBaldesSkewFred}), 
though the resulting indices are no-longer a property of only time-reversal symmetry.
This leads us to summarise the remaining difficulties in our approach.

\section{Open problems}

{\bf Interpreting torsion-valued invariants.}
A concrete representation of the torsion-valued index pairings that give rise to both bulk and 
edge pairings is a much more difficult task than in the complex case, where invariants 
can be expressed as the Fredholm index of the operators of interest. This is because 
we have to consider Kasparov products, which give rise to a `Clifford module' index. 

One advantage of unbounded 
Kasparov theory is that the operators we deal with and the modules we build have geometric 
or physical motivation and so can be linked to the underlying system. In particular 
it would be desirable
to link the Clifford module index with a more physical 
expression for the edge pairing as the edge invariant is meant to be 
directly linked to the existence of edge channels on a system with boundary.
This remains an open problem in the 
field and is related to the difficulty of measuring torsion-labelled states. 
See~\cite{Kellendonk16} for recent progress.

A further complication is that there are two copies of $\Z_2$ in the real $K$-theory of a 
point: in degrees 1 and 2. These different indices arise in different ways, and have 
different interpretations: see \cite{CPS}.

{\bf Disorder, localisation, and spectral gaps.}
Throughout this article we have required that the Hamiltonian retains a spectral gap.
This is so that there is an unambiguous method of constructing $K$-theory classes associated
to the Hamiltonian. It is well-known, however, that the complex pairing with
the class of the fundamental 
$K$-cycle continues to make sense whenever the spectrum of the Hamiltonian
has a {\em gap of extended states}. 

For the quantum Hall effect Bellissard et al. showed that this phenomenon arises 
physically from localisation arising from disorder~\cite{Bellissard94}. 
Mathematically this was
reflected in the properties of commutators between the position operator
$X$ and the projector $\chi_{(-\infty,a]}(H)$
for $a$ a point in such a gap of extended states of $H$. In turn this 
behaviour is seen in the expression for the Chern character, which makes
sense for non-torsion invariants, but seemingly has no analogue for torsion
invariants. Importantly, disorder does not affect the construction
of the  fundamental $K$-cycle.

{\bf The real local index formula.}
For non-torsion invariants of topological insulators, 
it seems reasonable to expect that the entirety of
the machinery developed for complex topological insulators in~\cite{PSBbook} 
has an exact analogue. The use of a Chern character,
the analysis of disorder and localisation, the Kubo formula
relating topological invariants to linear response coefficients all
make perfect sense.

While all this seems quite reasonable, there is a quite a lot of detail to check.
There are many steps in the production of, for instance, the local index
formula, and the heavy reliance on spectral theory means that numerous 
details of the proof require a careful check.

{\bf The $K$-theory class.}
Relating experiments on topological phases to details of
single particle Hamiltonians with symmetries is not a simple
task.  The measured quantities should correspond to pairings
of the fundamental $K$-cycle with $K$-theory classes of the observable
algebra arising from the Hamiltonian. 

Having the $K$-homology data (the fundamental $K$-cycle) fixed allows
us to compare the results of future experiments with the predictions 
arising from pairing with different $K$-theory classes. 

%%%%%%%%%%%%%%%%%%%%%%%%%%%%%%%%%%%%%%%%%%%%%%%%%%%%%%%%%%%%%%%%%%%%%%

%%%%%%%%%%%%%%%%%%%%%%%%%%%%%%%%%%%%%%%%%%%%%%%%%%%%%%%%%%%%%%%%%%%%%%
%%%%%%%%%%%%%%%%%%%%%%%%%%%%%%%%%%%%%%%%%%%%%%%%%%%%%%%%%%%%%%%%%%%%%%
%%%%%%%%%%%%%%%%%%%%%%%%%%%%%%%%%%%%%%%%%%%%%%%%%%%%%%%%%%%%%%%%%%%%%%

\end{document}